\newtheorem{thm}{Theorem}
\newtheorem{lem}{Lemma}
\newtheorem{dfn}{Definition}
\newtheorem{Cor}{Corollary}
\newtheorem{con}{Conjecture}
\def\<{\langle}
\def\>{\rangle}
\def\be{\begin{equation*}}
\def\ee{\end{equation*}}
\def\bea{\begin{eqnarray*}}
\def\eea{\end{eqnarray*}}
\newcommand{\comment}[1]{}
\newcommand{\ket}[1]{\ensuremath{\left|#1\right\rangle}}
\begin{document}

\title{Quantum Max-Flow Min-Cut theorem}

\author{Nengkun Yu\\
Centre for Quantum Software and Information, \\Faculty of Engineering and Information Technology,\\ University of Technology Sydney, NSW 2007, Australia}


\maketitle


\begin{abstract}
The max-flow min-cut theorem is a cornerstone result in combinatorial optimization. Calegari et al. (J Am Math Soc 23(1):107–188, 2010) initialized the study of quantum max-flow min-cut conjecture, which connects the rank of a tensor network and the min-cut. Cui et al. (J Math Phys 57:062206, 2016) showed that this conjecture is false generally. In this paper, we establish a quantum max-flow min-cut theorem for a new definition of quantum maximum flow. In particular, we show that for any quantum tensor network, there are infinitely many $n$, such that quantum max-flow equals quantum min-cut, after attaching dimension $n$ maximally entangled state to each edge as ancilla. Our result implies that the ratio of the quantum max-flow to the quantum min-cut converges to $1$ as the dimension $n$ tends to infinity. As a direct application, we prove the validity of the asymptotical version of the open problem about the quantum max-flow and the min-cut, proposed in Cui et al. (J Math Phys 57:062206, 2016).

\end{abstract}


\section{Introduction}
The max-flow min-cut theorem states that in a flow network, the maximum amount of flow passing from the source to the sink is equal to the total weight of the edges in a minimum cut \cite{ford_fulkerson_1956}. This beautiful theorem has nontrivial applications in network connectivity, network reliability, security of statistical data, data mining, distributed computing, and many many more \cite{10.5555/1942094}.

A tensor network associated with a graph is constructing tensors in large spaces from smaller
building-block tensors. Calegari et al. introduced the quantum max-flow min-cut conjecture in
\cite{10.2307/40587269}. This conjecture studies the rank of a tensor network by analysing the maximal classical flow (or minimal cut) on the graph corresponding to the tensor network.
Cui et al. considered two versions of the quantum max-flow min-cut conjecture and provided explicit counter-examples for both versions in \cite{Cui_2016}. They left an open problem about the asymptotical version of the quantum max-flow min-cut. They considered an interesting scenario with parameter $n$ as the degrees of freedom on the network's edges by attaching dimension $n$ maximally entangled state to each edge as ancilla for any quantum tensor network. They conjectured that the quantum max-flow becomes quantum min-cut as $n$ tends to infinity. Their intuition of the above equation is that a quantum phenomenon (quantum max-flow is strictly less than quantum min-cut) disappears in a more extensive system. Hastings proved that the quantum max-flow min-cut conjecture--version II is asymptotically true \cite{Hastings_2016}. More precisely, Hastings showed that the ratio of the quantum max-flow to the quantum min-cut converges to 1, in the limit of large dimension $n$. Hastings also suggested a weaker version of the quantum max-flow min-cut theorem: There exists $n$ such that version II quantum max-flow equals min-cut theorem. By verifying a numerical prediction from \cite{Hastings_2016}, Gesmundo et al. proved that there are infinitely many $n$ such that the version II quantum max-flow is strictly less than the quantum min-cut, even after attaching $n$-dimensional ancilla in \cite{Gesmundo_2018}. Regarding quantum tensor network as an entanglement network, we studied the quantum capacities for such quantum tensor network in \cite{7541586}.

This paper defines a new version of quantum maximum flow, motivated by the study of quantum capacity for the quantum repeater network with free classical communication. Due to the natural structure of quantum mechanics, the problem of finding a maximum flow for a quantum tensor network becomes integer programming with monomial constraints. We establish quantum max-flow min-cut theorem for this new definition of quantum maximum flow: For any quantum tensor network, there always exist infinitely many $n$, such that quantum max-flow equals quantum min-cut, after attaching dimension $n$ maximally entangled state to each edge as ancilla. This theorem implies that the ratio of the newly defined quantum max-flow and quantum min-cut converges to $1$ as the dimension $n$ tends to infinity. By connecting this quantum maximum flow and the version I quantum maximum flow in \cite{Cui_2016}, we show that the ratio of the version I quantum maximum flow to the quantum min-cut converges to $1$ as $n$ tends to infinity.

In Section \ref{preliminary}, we present definitions of quantum maximum flows and minimum cut. In Section \ref{qmaxmin}, we first prove the existence of rational flow equals minimum cut, then use that to show the quantum maximum flow minimum cut theorem. In section \ref{om}, we provide an operational meaning by interpreting our quantum maximum flow minimum cut theorem as the existence of entanglement distribution using quantum teleportation. That interpretation implies the validity of the asymptotical version of the conjecture in \cite{Cui_2016}.

\section{Preliminaries and Notations} \label{preliminary}
In this section, we provide the definitions of the quantum maximum flows and minimum cut after recalling the max-flow min-cut theorem of classical flow networks. Then we 

\subsection{Classical max-flow min-cut theorem}

The max-flow min-cut theorem relates two quantities: the maximum flow through a network and the minimum capacity of a cut of the network. That is, the flow achieves the minimum capacity. To state the theorem, we recall the definitions of these quantities.

Let $\mathcal{N}=(V,E)$ be a directed graph, where $V$ denotes the set of vertices, and $E$ is the set of edges. Let $S \subseteq V$ and $T\subseteq V$ be the source and the sink of $\mathcal{N}$, respectively.
We only need to consider networks with one source vertex $s$ and one sink vertex $t$. The general case for transmitting information
from $S$ to $T$ reduces to this simple case by viewing all $s_i$/$t_j$
as one source/sink.

The capacity of an edge is a mapping $c:E \mapsto\mathbb{R}^+$ denoted by $c_{u,v}$ where $u,v \in V$. It represents the maximum amount of flow that can pass through an edge.

The flow of a flow network is defined as follows. We will call it a classical flow.
\begin{dfn} \label{classical-flow}
A flow is a mapping $f : E \mapsto \mathbb{R}^+$ denoted by$f_{u,v}$, subject to the following two constraints:
\begin{itemize}
\item Capacity Constraint: For every edge $(u , v)\in E$, $f_{u,v} \leq c_{u,v}$.

\item Conservation of Flows: For each vertex $v$ apart from $s$ and $t$ (i.e. the source and sink, respectively), the following equality holds:
\begin{align*}
\sum_{ u : ( u , v ) \in E } f_{u v} = \sum_{ w : ( v , w ) \in E } f_{v, w} .
\end{align*}
\end{itemize}
The value of a flow is defined by
\begin{align*}
| f | = \sum_{ v : ( s , v ) \in E } f_{s, v} = \sum_{ v : ( v , t )\in E } f_{v ,t}.
\end{align*}

The maximum flow problem asks for the largest flow on a given network, denoted as $MF(\mathcal{N},c)$.
\end{dfn}

The cut of a flow network is the smallest total weight of the edges which if removed would disconnect the source from the sink. We will call it a classical cut.
\begin{dfn}\label{classical-cut}
A cut is a partition of $V$ into $\bar{S} \cup \bar{T}$ , where $s \in \bar{S}$ and $t \in\bar{T}$. The
cut set of the cut is the set of edges $(v,w)\in E$ with $v \in \bar{S}$ and $w \in\bar{T}$. For a given flow network, and a given cut set $\bar{S} \cup \bar{T}$ separating $s$ from $t$, its capacity is defined as
\begin{align*}
\mathop{\sum}\limits_{\begin{subarray}{c} u,v: (u,v)\in E \\u\in \bar{S}, v\in \bar{T}\end{subarray}} c_{u,v}.
\end{align*}
The min-cut of the network, $MC$ is the
minimum capacity over all cuts.
\end{dfn}

The following result links the maximum flow through a network with the minimum cut of the network. It was published in 1956 by L. R. Ford Jr. and D. R. Fulkerson.

\begin{thm}\label{classicalmaxmin} \cite{ford_fulkerson_1956}
The maximum value of a flow is equal to the minimum capacity overall cuts. Moreover, if the capacity for all edges in the graph are integers, then the Ford-Fulkerson method finds a max flow in which every flow value is an integer.
\end{thm}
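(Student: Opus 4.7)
The plan is to prove weak duality first, then strong duality via the augmenting-path construction, and finally derive the integrality statement as a by-product of integer-valued augmentations.

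First, I would establish weak duality: for any flow $f$ and any $s$-$t$ cut $(\bar S, \bar T)$, the value $|f|$ is at most the capacity of the cut. The key identity is that
$$|f| = \sum_{u \in \bar S,\, v \in \bar T} f_{u,v} \;-\; \sum_{u \in \bar S,\, v \in \bar T} f_{v,u},$$
obtained by summing the conservation-of-flows equation over all $v \in \bar S \setminus \{s\}$, adding the definition of $|f|$ for $v = s$, and cancelling internal edges whose endpoints both lie in $\bar S$. Since flow values are non-negative and each $f_{u,v}$ is at most $c_{u,v}$, this immediately yields $|f| \leq \sum_{u \in \bar S,\, v \in \bar T} c_{u,v}$, and in particular $MF(\mathcal{N},c) \leq MC$.

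Next, I would prove the matching lower bound via the Ford--Fulkerson augmenting-path construction. Given a current flow $f$, form the residual network $G_f$ in which each original edge $(u,v)$ contributes a forward capacity $c_{u,v} - f_{u,v}$ and a backward capacity $f_{u,v}$. If $G_f$ contains a directed path $P$ from $s$ to $t$, pushing $\delta := \min_{e \in P} c^f_e$ units of flow along $P$ yields a new feasible flow of strictly greater value. Iterating produces a flow $f^*$ such that $G_{f^*}$ admits no $s$-to-$t$ path; let $\bar S$ be the set of vertices reachable from $s$ in $G_{f^*}$ and $\bar T = V \setminus \bar S$. By construction $s \in \bar S$ and $t \in \bar T$; moreover for every $u \in \bar S$, $v \in \bar T$ we must have $f^*_{u,v} = c_{u,v}$ (else $v$ would be reachable via the forward residual edge) and $f^*_{v,u} = 0$ (else $v$ would be reachable via the backward residual edge). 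Plugging these saturations into the identity from the previous step gives $|f^*| = \sum_{u \in \bar S,\, v \in \bar T} c_{u,v}$, exhibiting a cut whose capacity equals $|f^*|$. Combined with weak duality this forces $MF(\mathcal{N},c) = MC$.

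The main obstacle is termination of the augmentation loop, which also delivers the integrality clause. When every $c_{u,v}$ is an integer, starting from the zero flow guarantees that $\delta$ is a positive integer at every iteration, so each augmentation increases $|f|$ by at least $1$; since $|f|$ is bounded above by any finite cut capacity, the procedure halts after finitely many steps, and all intermediate flows remain integer-valued, yielding the ``Moreover'' statement. For irrational capacities a naive choice of augmenting paths may fail to terminate, and one needs either a disciplined rule such as shortest augmenting paths (Edmonds--Karp) or a compactness/limit argument to obtain an optimal flow; this is the only genuinely delicate point, and it does not affect the integer case that the theorem highlights.
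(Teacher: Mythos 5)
The paper offers no proof of this statement; it is quoted as the classical Ford--Fulkerson theorem with a citation, so there is nothing internal to compare against. Your argument is the standard and correct one: weak duality from the cut identity obtained by summing conservation over $\bar S\setminus\{s\}$, strong duality from the residual-network/augmenting-path construction with $\bar S$ taken as the set of vertices reachable from $s$ when no augmenting path remains, and integrality from the observation that every augmentation amount $\delta$ is a positive integer when one starts from the zero flow. The one loose end you flag --- termination or existence of a maximum flow for irrational capacities --- is actually relevant to this paper, since Lemma 4 applies the theorem to capacities $\log d_{u,v}$, which are generally irrational; the compactness argument you mention (the feasible region is a compact polytope and $|f|$ is continuous, so a maximum flow exists and then admits no augmenting path) closes that case, so your proof is complete in the form needed here.
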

We call the integer part of this theorem the Flow Integrality Theorem.

\subsection{Quantum maximum flows and min cut}

A tensor network associated with a graph is constructing tensors in large spaces from smaller
building-block tensors. Given an undirected graph $G = (V,E)$ with a capacity function
$d: E \rightarrow \mathbb{N}_{\geq 2}$ and a resource
$S=\{s\} \subseteq V$ (resp. sink $T=\{t\} \subseteq V$), we can define a quantum tensor network
associated to $(G,d,S,T)$ as an entanglement network,
where each pair of connected nodes of $e \in E$ share a maximally
entangled state
$\ket{\psi_{e}}=\sum_{i=1}^{d_{e}}\ket{ii}/\sqrt{d_{e}}$.

Usually, the dimension $d_e$ of the maximally entangled state $\ket{\psi_{e}}$ is
chosen to be the same on each edge $e\in E$ as studied in \cite{Hastings_2016}. However, the more general case, where the $d_e\,'$s may be different, is also known to be interesting, which has demonstrated connections to the topological quantum field theory and the theory of quantum gravity \cite{Cui_2016,Ryu_2006,Headrick_2014,Hayden_2016}.

We can define the quantum minimum cut as follows.

\begin{dfn}
A cut is a partition of $V$ into $\bar{S} \cup \bar{T}$ , where $s \in \bar{S}$ and $t \in\bar{T}$. The
cut set of the cut is the set of edges $(v,w)\in E$ with $v \in \bar{S}$ and $w \in\bar{T}$. For a given graph,
and a given cut set separating $s$ from $t$, define $D_C$ to equal the product of the capacities
of the edges in the cut set. In other words,
\begin{align*}
D_C=\mathop{\Pi}\limits_{\begin{subarray}{c} u,v: (u,v)\in E \\u\in \bar{S}, v\in \bar{T}\end{subarray}} d_{u,v}.
\end{align*}
The quantum min-cut of the network, $QMC$ is the
minimum of $D_C$ over all cuts.
\end{dfn}

Different from classcial min-cut, the quantum min-cut is defined as the product of the capacities
of the edges in the cut set. This quantum feature can be understood by observing 
\begin{align*}
\ket{\psi_{e}}\otimes \ket{\psi_{e'}}=\sum_{i=1}^{d_{e}d_{e'}}\ket{ii}/\sqrt{d_{e}d_{e'}}.
\end{align*}

An interesting type of entanglement networks we focus on is the tensor network, which transports linear algebraic things like rank and entanglement \cite{bauer2011tensor,Cui_2016}. Given a network $\mathcal{N} = (G, d, S,T)$, let $\tilde{V} = V \setminus (S \cup T)$, and for each $v \in V$, let $E(v) \subset E$ be the set of edges containing $v$. We define $V(S) = \bigotimes\limits_{e \in E(s)} \mathbb{C}^{d_e}$, and define $V(T)= \bigotimes\limits_{e \in E(t)} \mathbb{C}^{d_e}$. One can assign a set of tensors $\mathcal{T} = \{\mathcal{T}_u: u \in \tilde{V}\}$ to $\mathcal{N}$, where $\mathcal{T}_u$ is an arbitrary tensor in $\bigotimes\limits_{e \in E(u)}\mathbb{C}^{d_e}$, i.e., each index of $\mathcal{T}_u$ corresponds to an edge containing $u$. Then contracting the tensors along all internal edges results in a linear map $\beta_{\mathcal{T}}: V(S) \rightarrow V(T)$. With the notations above,
\begin{dfn}
The quantum max-flow, $\widetilde{QMF}(\mathcal{N})$, is the the maximum rank of $\beta(\mathcal{N};\mathcal{T})$ over all tensor assignments $\mathcal{T}$.
\end{dfn}

One can interpret the quantum max-flow $\widetilde{QMF}(\mathcal{N})$ as the maximum rank of entanglement between the source and the sink that can be generated by applying stochastic local operation assisted by classical communication protocols. One can implement $\mathcal{T}_u$ in $\bigotimes\limits_{e \in E(u)}\mathbb{C}^{d_e}$ stochastically.

It is proved in \cite{Cui_2016} that 
\begin{thm}\label{exist}
For any quantum tensor network $\mathcal{N}=(G,d,S,T)$,
\begin{align*}
\widetilde{QMF}(\mathcal{N})\leq QMC(\mathcal{N}).
\end{align*}
\end{thm}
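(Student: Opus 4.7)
The plan is to fix an arbitrary cut and show that the contracted linear map $\beta_{\mathcal{T}}$ factors through a space whose dimension is exactly the product of the capacities of the cut edges; taking the minimum over cuts then yields the bound. Concretely, let $(\bar S,\bar T)$ be any cut with $s\in\bar S$, $t\in\bar T$, and let $F\subseteq E$ be its cut set, so $D_C=\prod_{e\in F}d_e$.

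First, I would partition the tensor assignment $\mathcal{T}=\{\mathcal{T}_u:u\in\tilde V\}$ into $\mathcal{T}_{\bar S}=\{\mathcal{T}_u:u\in\bar S\setminus\{s\}\}$ and $\mathcal{T}_{\bar T}=\{\mathcal{T}_u:u\in\bar T\setminus\{t\}\}$, and similarly split the maximally entangled states $\ket{\psi_e}$ into three groups: those on edges entirely inside $\bar S$, those entirely inside $\bar T$, and those in $F$. Contracting all the tensors and entangled states lying on the $\bar S$-side (but leaving the cut edges uncontracted on their $\bar T$ endpoint) produces a linear map
\begin{equation*}
L_{\bar S}:V(S)\longrightarrow \bigotimes_{e\in F}\mathbb{C}^{d_e},
\end{equation*}
and similarly the $\bar T$-side contraction gives
\begin{equation*}
L_{\bar T}:\bigotimes_{e\in F}\mathbb{C}^{d_e}\longrightarrow V(T).
\end{equation*}
Because the maximally entangled states on $F$ identify the two copies of $\bigotimes_{e\in F}\mathbb{C}^{d_e}$ up to a normalization, the full contraction satisfies $\beta_{\mathcal{T}}=c\cdot L_{\bar T}\circ L_{\bar S}$ for some nonzero scalar $c$.

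From this factorization the rank bound is immediate: $\mathrm{rank}(\beta_{\mathcal{T}})\le \dim\bigotimes_{e\in F}\mathbb{C}^{d_e}=\prod_{e\in F}d_e=D_C$. Since this holds for every tensor assignment $\mathcal{T}$, we get $\widetilde{QMF}(\mathcal{N})\le D_C$ for every cut, and taking the minimum over cuts gives $\widetilde{QMF}(\mathcal{N})\le QMC(\mathcal{N})$.

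The only mildly delicate point -- and the one I would be most careful to spell out -- is the bookkeeping of indices when defining $L_{\bar S}$ and $L_{\bar T}$: one must argue that the tensor network on each side really yields a well-defined linear map to/from $\bigotimes_{e\in F}\mathbb{C}^{d_e}$, independent of the order in which internal edges are contracted. This is a standard tensor-network fact (contraction is associative), but it is the step that carries all the content. Everything else is just dimension counting.
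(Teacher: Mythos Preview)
The paper does not supply its own proof of this statement; it is quoted as a result of Cui et al.\ \cite{Cui_2016}. Your argument is correct and is exactly the standard factorization-through-the-cut proof from that reference: any cut $(\bar S,\bar T)$ with cut set $F$ yields a factorization $\beta_{\mathcal{T}}=L_{\bar T}\circ L_{\bar S}$ through $\bigotimes_{e\in F}\mathbb{C}^{d_e}$, hence $\mathrm{rank}(\beta_{\mathcal{T}})\le D_C$, and minimizing over cuts gives the bound.

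One cosmetic remark: in the paper's formulation of $\beta_{\mathcal{T}}$, the vertex tensors $\mathcal{T}_u$ are contracted directly along shared internal edges rather than via explicit maximally-entangled edge states, so your three-way split of the $\ket{\psi_e}$'s is not literally present---the edge contraction and the insertion of an unnormalized copy of $\sum_i\ket{ii}$ are the same operation. This just means the scalar $c$ in your factorization is $1$; nothing in the argument changes.
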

Unfortuantely, the equality is not generally saturated. For given quantum tensor network, they studied new networks by attaching an additional maximally entangled state $\ket{\psi}=\sum_{i=1}^n \ket{ii}/{\sqrt{n}}$ on each edge.
\begin{dfn}
Given the network $\mathcal{N} = (G, d, S, T)$, and $n\in\mathbb{N}$, we define $n\mathcal{N} = (G, nd, S, T)$. In other words, the capacity of each edge multuples $n$. 
\end{dfn}
Motivated by properties of frustration free local hamiltonian, Cui et al. proposed the open question
\begin{con}
For any quantum tensor network $\mathcal{N}=(G,d,S,T)$,
\begin{align*}
\forall \mathcal{N}, \ \ \  \lim_{n \to \infty} \widetilde{QMF}(n\mathcal{N})=QMC(n\mathcal{N}).
\end{align*}
\end{con}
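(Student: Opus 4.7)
Proof plan: The approach is to introduce a combinatorial quantum max-flow $QMF(\mathcal{N})$ based on integer-dimensional quantum teleportation along paths, establish a max-flow min-cut equality $QMF(n\mathcal{N})=QMC(n\mathcal{N})$ on the scaled network for infinitely many $n$, and then deduce the asymptotic statement from the sandwich $QMF \leq \widetilde{QMF}\leq QMC$. Concretely, I would define $QMF(\mathcal{N})$ as the maximum of $\prod_i n_i$ taken over all multisets of $s$-$t$ paths $P_1,\ldots,P_k$ together with integer dimensions $n_i\geq 2$ satisfying the multiplicative capacity constraint $\prod_{i:e\in P_i} n_i \leq d_e$ for every edge $e$. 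Any such assignment yields an explicit entanglement-distribution protocol obtained by teleporting an $n_i$-dimensional system along $P_i$, and realizing each intermediate teleportation as a choice of local tensor $\mathcal{T}_u$ shows $QMF(\mathcal{N}) \leq \widetilde{QMF}(\mathcal{N})$; combined with Theorem~\ref{exist} this yields the desired sandwich.

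The second step is to pass to logarithms, so that the constraints become additive, $\sum_{i:e\in P_i}\log n_i \leq \log d_e$, and the objective becomes $\sum_i \log n_i$. Relaxing the quantization $n_i \in \mathbb{Z}_{\geq 2}$ reduces the problem to an ordinary max-flow on the log-graph with real capacities $\log d_e$, whose value equals $\log QMC(\mathcal{N})$ by Theorem~\ref{classicalmaxmin}. The first technical milestone, matching the roadmap in the introduction, is to upgrade this to a rational flow equal to the min-cut on the scaled network: I would work multiplicatively, observe that the optimum cut of $n\mathcal{N}$ stabilizes, for large $n$, to a cut $C^{*}$ minimizing the edge count $k^{*}$, and construct a rational path decomposition saturating $\prod_{e\in C^{*}}(nd_e)$ by applying the Flow Integrality part of Theorem~\ref{classicalmaxmin} after an appropriate rationalization of the capacities.

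For the main equality I would then show that for $n$ ranging over an infinite set (for instance $n=N^m$ with $N$ a common multiple of the denominators in the rational decomposition and of suitable factors of the $d_e$), the rational log-space decomposition lifts to a genuine integer dimension vector $(n_1,\ldots,n_k)$ with $\prod_i n_i = QMC(n\mathcal{N})$. This forces $QMF(n\mathcal{N}) = QMC(n\mathcal{N})$, and hence $\widetilde{QMF}(n\mathcal{N}) = QMC(n\mathcal{N})$, for infinitely many $n$. To extend this subsequence equality to the full asymptotic limit, I would combine the polynomial growth $QMC(n\mathcal{N}) \sim n^{k^{*}} D^{*}$ with a monotonicity/tensor-product argument for $\widetilde{QMF}$, interpolating between consecutive values of the subsequence.

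The main obstacle is the integrality step: the set $\log\mathbb{Z}_{\geq 2}$ is a discrete, irregularly spaced subset of $\mathbb{R}_{>0}$, so converting a rational log-space decomposition into integer quantum dimensions is not automatic. The leverage comes from the scaling $d_e \mapsto nd_e$, which adds $\log n$ of slack to every edge capacity; choosing $n$ as a large power of a well-chosen base aligns the rational decomposition with $\log\mathbb{Z}$ exactly. Making this choice effective, and verifying that the resulting integer dimensions exactly saturate the min-cut of $n\mathcal{N}$ rather than merely approximating it, is where the core technical work lies.
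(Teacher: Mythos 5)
Your overall strategy coincides with the paper's: sandwich $\widetilde{QMF}$ between a teleportation-realizable integer flow and $QMC$, reduce to the classical max-flow min-cut theorem by taking logarithms, obtain exact equality $QMF(n\mathcal{N})=QMC(n\mathcal{N})$ on a subsequence, and pass to the limit using monotonicity of $\widetilde{QMF}$ together with the polynomial growth $QMC(n\mathcal{N})=n^{|C|}\prod_{e\in C}d_e$. However, two of your intermediate steps would fail as described. First, the rationalization step: you propose to obtain a rational flow saturating the min cut by ``applying the Flow Integrality part of Theorem~\ref{classicalmaxmin} after an appropriate rationalization of the capacities.'' The log-capacities $\log d_e$ are in general incommensurable irrationals (e.g.\ $\log 2$ and $\log 3$), so no rescaling makes them integral, and replacing them by rational approximations changes the min-cut value and may violate the true capacities, so the resulting integer flow saturates the wrong quantity. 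The paper instead applies the real max-flow min-cut theorem to the log-network and then repairs irrationality directly: at any vertex an irrational edge value forces another incident irrational edge, so the irrational edges contain a cycle, and multiplying the flow along that cycle by $1+\epsilon$ preserves conservation and the value, while Lemma~\ref{saturate} guarantees such cycles avoid every minimum cut so feasibility is preserved; iterating kills all irrational edges. (Alternatively, total unimodularity of the network matrix yields a maximum flow whose edge values are integer combinations of the $\log d_e$, hence logarithms of rationals; either repair is needed, but your stated one is not.)

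Second, the integrality lift. Choosing $n=N^{m}$ so that ``the rational decomposition aligns with $\log\mathbb{Z}$'' glosses over the actual obstruction: if you rescale every path weight by a common factor, an edge used by $j_e$ paths has its flow multiplied by that factor to the power $j_e$, and since $j_e$ varies over the min-cut edges you cannot simultaneously respect all capacities $nd_e$ and reach the target value $n^{|C|}\prod_{e\in C}d_e$ (saturating the cut forces every cut edge to absorb exactly one factor of $n$). The missing idea is the one in the proof of Lemma~\ref{main}: apply Flow Integrality to the unit-capacity version of the network to extract $|C|$ edge-disjoint $s$--$t$ paths, route the entire extra factor $km_0$ along those disjoint paths only (so each min-cut edge is rescaled exactly once), and encode the residual rational flow $p/q$ on an off-path edge as the integer pair $f_{u,v}=p$, $f_{v,u}=q$ allowed by the bidirectional constraint $f_{u,v}f_{v,u}\leq d_{u,v}$ --- a device your purely path-based definition of $QMF$ does not obviously accommodate at this intermediate stage (a single path cannot carry dimension $3/2$). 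With these two repairs your plan matches the paper's Lemmas~\ref{rational} and~\ref{main} and its concluding Corollary.
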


To track this problem, we consider the quantum information transmission from the source vertex $s$ to the
sink vertex $t$ through the network via local quantum
operations and classical communications. More precisely, we use quantum teleportation for transmitting quantum information in the network. One can implement quantum teleportation by allowing quantum operations at each vertex, and classical communications are free. The goal is then to establish maximum
bipartite entanglement between $S$ and $T$ via teleportation protocols.

Since we allow unlimited classical communications, the direction of quantum communication on each edge $e$ can be arbitrary given a maximally entangled state associated with an
edge $e$. Therefore, we allow both directions of quantum communication for any edge $e$. The only restriction is that the product of both capacities is at most $d_e$.

This motivates us to provide the following definition of the quantum flow.
\begin{dfn}\label{quantum-flow}
A flow is a map $f:E\mapsto \mathbb{N}$ that satisfies the following:
\begin{itemize}
\item Capacity constraint. The flow of an edge cannot exceed its capacity, in other words: $1\leq f_{v,u}\times f_{u,v}\leq d_{u,v}$ for all $(u,v)\in E$.
\item Conservation of flows. The product of the flows entering a node must equal the product of the flows exiting that node, except for the source and the sink. In other words, $\forall v\in V\setminus\{s,t\}$:
\begin{align}
\mathop{\Pi}\limits_{u:(u,v)\in E} {f_{u,v}} =\mathop{\Pi}\limits_{u:(v,u)\in E} f_{v,u}.
\end{align}
\end{itemize}
\end{dfn}
We allow the incoming flow of the source and outgoing flow of the sink. In other words, there may exists $v$ such that $f_{v,s}>1$ or $f_{t,v}>1$.

The value of a flow $f$ is defined as follows.
\begin{dfn}
The value of a flow is the amount of flow passing from the source to the sink. Formally for a flow $f:E\mapsto \mathbb{N}$ it is given by
\begin{align}
|f|=\frac{\mathop{\Pi}\limits_{v:(s,v)\in E} f_{s,v}}{\mathop{\Pi}\limits_{u:(u,s)\in E} f_{u,s}}.
\end{align}
The maximum flow problem is to route as much flow as possible from the source to the sink. In other words, the maximum flow of the network, $QMF$, is the flow $f_{\max}$ with maximum value.
\end{dfn}

A particular flow class with no incoming flow of the source and no outgoing flow of the sink is of great interest.
\begin{dfn}\label{restrict-quantum-flow}
A flow $f:E\mapsto \mathbb{N}$ is a strict flow if $f_{v,s}=f_{t,u}=1$ for all $(s,v),(u,t)\in E$. The maximum  strict flow of the network, $QMF_s$, is the flow $f_{\max}$ with maximum value.
\end{dfn}
Here $G = (V,E)$ is undirected graph, therefore, $(u,v)\in E$ implies $(v,u)\in E$.

Compute the quantum maximum flow and strict quantum maximum flow are both 

To prove our main results, we need the following definitions of $\mathbb{R}$ flow and $\mathbb{Q}$ flow. For $\mathbb{F}=\mathbb{R}$ or $\mathbb{Q}$, we define the quantum flow over $\mathbb{F}$.
\begin{dfn}
For quantum tensor network $\mathcal{N}=(G,d,S,T)$, a flow over $\mathbb{F}$ is a map $f:E\mapsto \mathbb{F}$ that satisfies the following:
\begin{itemize}
\item Single direction. $f_{v,u}=1$ or $f_{u,v}=1$ for all $(u,v)\in E$.
\item Capacity constraint. The flow of an edge cannot exceed its capacity, in other words: $\frac{1}{d_{u,v}}\leq f_{v,u}, f_{u,v}\leq d_{u,v}$ for all $(u,v)\in E$.
\item Conservation of flows. The product of the flows entering a node must equal the product of the flows exiting that node, except for the source and the sink. In other words, $\forall v\in V\setminus\{s,t\}$:
\begin{align}
\mathop{\Pi}\limits_{u:(u,v)\in E} {f_{u,v}} =\mathop{\Pi}\limits_{u:(v,u)\in E} f_{v,u}.
\end{align}
\end{itemize}
The value of a flow $f:E\mapsto \mathbb{F}$ it is given by
\begin{align}
|f|=\frac{\mathop{\Pi}\limits_{v:(s,v)\in E} f_{s,v}}{\mathop{\Pi}\limits_{u:(u,s)\in E} f_{u,s}}.
\end{align}
The maximum (or supremum) flow over $\mathbb{F}$ of the network, $QMF_{\mathbb{F}}$, is the flow with maximum (or supremum) value over $\mathbb{F}$. 
\end{dfn}
For $\mathbb{F}=\mathbb{R}$, the constraints are all closed, so the maximum is achievable. For $\mathbb{F}=\mathbb{Q}$, the definition uses supremum.

\section{Quantum maximum flow minimum cut theorem} \label{qmaxmin}

Before presenting the main result of this paper, we first observe the following lemma.
\begin{lem}\label{flow}
For any flow $f$ over $\mathbb{F}$ of the quantum tensor network $\mathcal{N} = (G, d, S, T)$, and any cut $\bar{S}$ and $\bar{T}$, we have
\begin{align*}
|f|=\frac{\mathop{\Pi}\limits_{\begin{subarray}{c} (v,w)\in E\\v\in \bar{S}, w\in \bar{T}\end{subarray}} f_{v,w}}{\mathop{\Pi}\limits_{\begin{subarray}{c} (r,u)\in E\\ u\in \bar{S}, r\in \bar{T}\end{subarray}} f_{r,u}}.
\end{align*}
\end{lem}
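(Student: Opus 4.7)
The plan is to imitate the classical proof that the value of a flow equals the net flow across any cut, replacing sums by products. First I rewrite conservation at every internal vertex $v$ as the ratio identity $\prod_{u:(u,v)\in E} f_{u,v} \big/ \prod_{u:(v,u)\in E} f_{v,u} = 1$, and multiply this trivial identity over all $v \in \bar{S} \setminus \{s\}$.

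Next comes the key multiplicative cancellation: for any directed edge $(a,b)$ with both $a,b \in \bar{S} \setminus \{s\}$, the factor $f_{a,b}$ occurs once in the combined numerator (as an incoming edge at $b$) and once in the combined denominator (as an outgoing edge at $a$), so it drops out. The surviving factors are precisely the flows on edges having exactly one endpoint in $\bar{S} \setminus \{s\}$ and the other either in $\bar{T}$ or equal to $s$. Rearranging, the resulting identity expresses the ratio of $s$-to-$(\bar{S}\setminus\{s\})$ flows over $(\bar{S}\setminus\{s\})$-to-$s$ flows in terms of the ratio of $\bar{T}$-to-$(\bar{S}\setminus\{s\})$ flows over $(\bar{S}\setminus\{s\})$-to-$\bar{T}$ flows.

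To conclude, I split each of the two products in $|f| = \prod_{v:(s,v)\in E} f_{s,v} \big/ \prod_{u:(u,s)\in E} f_{u,s}$ according to whether the neighbor of $s$ lies in $\bar{S}$ or in $\bar{T}$, and substitute the telescoping identity above. The factors involving neighbors in $\bar{S} \setminus \{s\}$ cancel, and the remaining $s$-to-$\bar{T}$ and $\bar{T}$-to-$s$ terms merge with the surviving cross-cut products (using $s \in \bar{S}$ and $t \in \bar{T}$) to yield exactly the claimed formula.

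The main obstacle I anticipate is purely bookkeeping: because $G$ is undirected, each edge carries two independent flow values $f_{u,v}$ and $f_{v,u}$, so care is required to track on which side of the telescoping product each directed edge lands. Degenerate situations such as direct $s$-$t$ edges or $\bar{S} = \{s\}$ present no difficulty once empty products are read as $1$, and no part of the argument is specific to $\mathbb{F}$ beyond the fact that flow values are strictly positive, which is automatic from the capacity constraint $f_{u,v} \geq 1/d_{u,v} > 0$.
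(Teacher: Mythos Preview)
Your proposal is correct and is essentially the paper's own argument: the paper multiplies $|f|$ by the product over $x\in W=\bar{S}\setminus\{s\}$ of the conservation ratios $\prod_{y:(x,y)\in E} f_{x,y}\big/\prod_{z:(z,x)\in E} f_{z,x}=1$, and then observes that every flow on an edge with both endpoints in $\bar{S}$ appears once in the numerator and once in the denominator, leaving precisely the cross-cut products. Your version differs only cosmetically in that you first isolate the telescoping identity on $W$ and then substitute it into the split form of $|f|$; one small slip is that after rearranging you should obtain $\dfrac{\prod_{(s,v)\in E,\,v\in W} f_{s,v}}{\prod_{(v,s)\in E,\,v\in W} f_{v,s}}=\dfrac{\prod_{(v,w)\in E,\,v\in W,\,w\in\bar{T}} f_{v,w}}{\prod_{(r,v)\in E,\,v\in W,\,r\in\bar{T}} f_{r,v}}$, i.e.\ $W$-to-$\bar{T}$ over $\bar{T}$-to-$W$ rather than the reverse, but this is exactly what your final substitution step needs.
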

\begin{proof}
Let $W=\bar{S}\setminus \{s\}$.
\begin{align*}
|f|=&\frac{\mathop{\Pi}\limits_{v:(s,v)\in E} f_{s,v}}{\mathop{\Pi}\limits_{u:(u,s)\in E} f_{u,s}}\\
=&\frac{\mathop{\Pi}\limits_{v:(s,v)\in E} f_{s,v}}{\mathop{\Pi}\limits_{u:(u,s)\in E} f_{u,s}}\times {\mathop{\Pi}\limits_{x\in W}} \frac{\mathop{\Pi}\limits_{y:(x,y)\in E} f_{x,y}}{\mathop{\Pi}\limits_{z:(z,x)\in E} f_{z,x}}\\
=&\frac{\mathop{\Pi}\limits_{\begin{subarray}{c} (v,w)\in E\\v\in \bar{S}, w\in \bar{T}\end{subarray}} f_{v,w}}{\mathop{\Pi}\limits_{\begin{subarray}{c} (r,u)\in E\\ u\in \bar{S}, r\in \bar{T}\end{subarray}} f_{r,u}}.
\end{align*}
In the second line, we use the conservation of flows. In the third line, we use the following fact: The flow of each edge except those between $\bar{S}$ and $\bar{T}$ appears twice: One in the numerator and the other in the denominator.
\end{proof}
Using this lemma, we can show
\begin{lem}
For any $\mathcal{N} = (G, d, S, T)$,
\begin{align*}
QMF_{\mathbb{Q}}(\mathcal{N})\leq QMF_{\mathbb{R}}(\mathcal{N})\leq QMC(\mathcal{N}).
\end{align*}
\end{lem}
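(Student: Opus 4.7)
The plan is to handle the two inequalities separately. The left inequality $QMF_{\mathbb{Q}}(\mathcal{N}) \leq QMF_{\mathbb{R}}(\mathcal{N})$ is essentially by definition: since $\mathbb{Q} \subset \mathbb{R}$, every map $f : E \to \mathbb{Q}$ satisfying the single-direction, capacity, and conservation constraints is also a valid $\mathbb{R}$-flow with the same value $|f|$. Thus the set of values attained over $\mathbb{Q}$-flows is a subset of the set attained over $\mathbb{R}$-flows, so the supremum over the former is bounded by the maximum (achieved by compactness) over the latter.

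For the right inequality $QMF_{\mathbb{R}}(\mathcal{N}) \leq QMC(\mathcal{N})$, I would fix an arbitrary $\mathbb{R}$-flow $f$ and an arbitrary cut $(\bar{S}, \bar{T})$, and invoke Lemma \ref{flow} to rewrite the value as
\begin{align*}
|f| = \frac{\mathop{\Pi}\limits_{\begin{subarray}{c} (v,w)\in E\\v\in \bar{S}, w\in \bar{T}\end{subarray}} f_{v,w}}{\mathop{\Pi}\limits_{\begin{subarray}{c} (r,u)\in E\\ u\in \bar{S}, r\in \bar{T}\end{subarray}} f_{r,u}}.
\end{align*}
Since $G$ is undirected, both products range over the same set of cut edges $\{x,y\}$ with $x\in\bar{S}$, $y\in\bar{T}$, so $|f|$ equals the product over cut edges of the ratio $f_{x,y}/f_{y,x}$.

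The key step is then to bound each such ratio edge-by-edge using the single-direction and capacity constraints. For each cut edge $\{x,y\}$, the single-direction constraint forces at least one of $f_{x,y}, f_{y,x}$ to equal $1$; combined with $\frac{1}{d_{x,y}} \leq f_{x,y}, f_{y,x} \leq d_{x,y}$, a case split shows $f_{x,y}/f_{y,x} \leq d_{x,y}$ in every case. Taking the product over all cut edges yields $|f| \leq D_C$. Since this holds for every cut, $|f| \leq QMC(\mathcal{N})$, and maximizing over $f$ gives the claimed bound.

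There is no real obstacle here; the only delicate point is the bookkeeping in the per-edge case analysis, and the fact that the single-direction constraint is precisely the ingredient that allows each cut-edge ratio to be absorbed into a single capacity $d_{x,y}$ rather than $d_{x,y}^2$. Once that is observed, the argument reduces to a clean application of Lemma \ref{flow}.
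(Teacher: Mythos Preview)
Your proposal is correct and follows essentially the same approach as the paper: the first inequality is immediate from $\mathbb{Q}\subset\mathbb{R}$, and the second is obtained by applying Lemma~\ref{flow} at a (minimum) cut, collapsing the ratio to $\prod_{(v,w)\in C} f_{v,w}/f_{w,v}$, and bounding each factor by $d_{v,w}$ via the single-direction and capacity constraints. The only cosmetic difference is that the paper fixes a minimum cut from the outset, whereas you bound $|f|$ by $D_C$ for an arbitrary cut and then minimize; both arrive at the same conclusion.
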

\begin{proof}
Observe that a flow over $\mathbb{Q}$ is also a flow over $\mathbb{R}$, we know that
\begin{align*}
QMF_{\mathbb{Q}}(\mathcal{N})\leq QMF_{\mathbb{R}}(\mathcal{N}).
\end{align*}
To bound $QMF_{\mathbb{R}}(\mathcal{N})$, we choose a minimum cut $\bar{S}\cup\bar{T}$ with $C=(\bar{S}\times \bar{T})\subseteq E$. Using Lemma \ref{flow}, we know that any flow $f$ over $\mathbb{R}$ satsifies
\begin{align*}
|f|=\frac{\mathop{\Pi}\limits_{\begin{subarray}{c} (v,w)\in E\\v\in \bar{S}, w\in \bar{T}\end{subarray}} f_{v,w}}{\mathop{\Pi}\limits_{\begin{subarray}{c} (r,u)\in E\\ u\in \bar{S}, r\in \bar{T}\end{subarray}} f_{r,u}}
=\frac{\mathop{\Pi}\limits_{ (v,w)\in C} f_{v,w}}{\mathop{\Pi}\limits_{(u,r)\in C} f_{r,u}}
={\mathop{\Pi}\limits_{ (v,w)\in C}} \frac{f_{v,w}}{f_{w,v}}
\leq {\mathop{\Pi}\limits_{ (v,w)\in C}} d_{v,w}
=QMC(\mathcal{N}),
\end{align*}
where in the last step, we use the Single direction condition and the capacity constraint of a flow: if $f_{w,v}=1$, then $\frac{f_{v,w}}{f_{w,v}}=f_{v,w}\leq d_{v,w}$; otherwise $f_{v,w}=1$, then
 $\frac{f_{v,w}}{f_{w,v}}=\frac{1}{f_{w,v}}\leq d_{v,w}$.
\end{proof}
This proof of the above lemma shows the following property of flows which saturates the minimum cut.
\begin{lem}\label{saturate}
If a flow $f$ over $\mathbb{R}$ in quantum tensor network $\mathcal{N}$ such that $|f|=QMC(\mathcal{N})$, then for any minimum cut $\bar{S}\cup\bar{T}$ with $C=(\bar{S}\times \bar{T})\subseteq E$, we have
$f_{u,v}=d_{u,v}$ or $f_{v,u}=\frac{1}{d_{u,v}}$ for $u\in \bar{S},v\in\bar{T}$ and $(u,v)\in E$.
\end{lem}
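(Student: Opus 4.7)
The plan is to re-examine the inequality chain in the proof of the previous lemma and note that saturation of the overall product forces saturation in each factor. First I would fix an arbitrary minimum cut $\bar{S}\cup\bar{T}$ with cut set $C=(\bar{S}\times \bar{T})\subseteq E$ and, via Lemma \ref{flow}, rewrite
\[
|f| \;=\; \prod_{(v,w)\in C} \frac{f_{v,w}}{f_{w,v}}.
\]
For each $(v,w)\in C$, the single-direction condition ensures that either $f_{v,w}=1$ or $f_{w,v}=1$, so the ratio $f_{v,w}/f_{w,v}$ equals either $f_{v,w}$ or $1/f_{w,v}$; in either case the capacity constraint $1/d_{v,w}\leq f_{v,w},f_{w,v}\leq d_{v,w}$ bounds this ratio from above by $d_{v,w}$. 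Multiplying the per-edge inequalities recovers the bound $|f|\leq \prod_{(v,w)\in C} d_{v,w} = QMC(\mathcal{N})$.

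The next step is to exploit the hypothesis $|f|=QMC(\mathcal{N})$. Since the overall product of strictly positive factors $f_{v,w}/f_{w,v}$, each bounded above by the corresponding $d_{v,w}$, equals the product $\prod_{(v,w)\in C} d_{v,w}$, every per-edge inequality must be tight. Hence $f_{v,w}/f_{w,v}=d_{v,w}$ for each $(v,w)\in C$. Combining this equation with the single-direction condition yields precisely the stated alternative: if $f_{w,v}=1$ then $f_{v,w}=d_{v,w}$, while if $f_{v,w}=1$ then $f_{w,v}=1/d_{v,w}$. Relabelling $(v,w)$ as $(u,v)$ gives the claim.

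There is essentially no technical obstacle in this argument; it is a direct unwinding of the inequality already derived in the proof of the preceding lemma. The only bookkeeping point worth flagging is the positivity of every factor, which is required when passing from equality of products to equality of individual factors; this positivity is supplied by the capacity constraint of an $\mathbb{R}$-flow. Conceptually, the lemma simply records that a flow attaining the quantum min-cut must load each cut edge to its full capacity in one of the two permissible orientations.
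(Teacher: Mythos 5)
Your argument is correct and is exactly the paper's intended proof: the paper derives Lemma \ref{saturate} as a direct byproduct of the inequality chain $|f|=\prod_{(v,w)\in C} f_{v,w}/f_{w,v}\leq \prod_{(v,w)\in C} d_{v,w}=QMC(\mathcal{N})$ established in the preceding lemma, observing that equality forces each positive factor to saturate its bound. Your handling of the single-direction condition and the positivity needed to pass from equality of products to equality of factors matches what the paper leaves implicit.
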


Interestingly, there is always a rational flow to achieve the quantum minimum cut.
\begin{lem}\label{rational}
For any quantum tensor network $\mathcal{N} = (G, d, S, T)$, 
\begin{align*}
QMF_{\mathbb{Q}}(\mathcal{N})=QMC(\mathcal{N}).
\end{align*}
Moreover, there is a rational flow which achieves the maximum flow.
\end{lem}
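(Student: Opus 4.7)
The plan is to reduce the quantum flow problem to a classical additive max-flow problem by the change of variables $g_{u,v} := \log f_{u,v}$. Under this log-transformation, the capacity constraint becomes $|g_{u,v}| \leq \log d_{u,v}$, the multiplicative conservation of flows becomes the additive identity $\sum_{u:(u,v)\in E} g_{u,v} = \sum_{u:(v,u)\in E} g_{v,u}$ for each internal $v$, and the value becomes $\log|f| = \sum_{v:(s,v)\in E} g_{s,v} - \sum_{u:(u,s)\in E} g_{u,s}$. The single-direction constraint translates to $g_{u,v} = 0$ or $g_{v,u} = 0$. After picking an orientation of each undirected edge and setting $h_{u,v} := g_{u,v} - g_{v,u}$, the single-direction and capacity constraints together say exactly that $|h_{u,v}| \leq \log d_{u,v}$, the conservation says that $h$ is balanced at every internal vertex, and $\log|f|$ equals the net outflow of $h$ from $s$. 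In other words, $h$ is precisely a classical real-valued flow on $G$ with symmetric edge capacity $\log d_e$.

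I then apply the classical max-flow min-cut theorem (Theorem \ref{classicalmaxmin}) to this log-space network. The maximum of $\sum_v h_{s,v}$ equals the minimum cut capacity $\sum_{e \in C^*} \log d_e = \log QMC(\mathcal{N})$, attained at some $h^*$. Recovering a flow from $h^*$ edgewise by setting, when $h^*_e \geq 0$, $f^*_{u,v} = e^{h^*_e}$ and $f^*_{v,u} = 1$ (and symmetrically when $h^*_e < 0$) yields a real-valued flow with $|f^*| = QMC(\mathcal{N})$, giving $QMF_{\mathbb{R}}(\mathcal{N}) = QMC(\mathcal{N})$ in combination with the previous lemma.

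The main obstacle, and the crux of the statement, is ensuring the optimal flow can be chosen to be rational-valued, since the upper bound $QMF_{\mathbb{Q}}(\mathcal{N}) \leq QMC(\mathcal{N})$ is already supplied by the previous lemma. For this I would invoke total unimodularity: the constraint matrix of the log-space LP is the signed node-edge incidence matrix of $G$ together with box constraints, which is a standard network matrix and hence totally unimodular. A standard consequence is that every vertex of the feasible polytope has coordinates that are integer linear combinations of the right-hand-side entries $\pm \log d_e$ (by Cramer's rule on the $\pm 1$-determinant basis submatrix). Since a linear objective attains its maximum at a vertex, there is an optimal $h^*$ of the form $h^*_{u,v} = \sum_{e' \in E} n_{u,v,e'} \log d_{e'}$ with $n_{u,v,e'} \in \mathbb{Z}$. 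Exponentiating, $f^*_{u,v} = \prod_{e'} d_{e'}^{n_{u,v,e'}}$, a product of integer powers of integers, hence rational. This produces a rational flow of value $QMC(\mathcal{N})$, which both shows $QMF_{\mathbb{Q}}(\mathcal{N}) = QMC(\mathcal{N})$ and exhibits the claimed rational maximizer.
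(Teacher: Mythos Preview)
Your reduction to the classical additive problem via the logarithm is the same opening move as the paper's proof, and the recovery $f^*_{u,v}=e^{h^*_e}$ (with the other direction set to $1$) is exactly how the paper passes back from the classical optimum to a real-valued quantum flow with $|f^*|=QMC(\mathcal{N})$.

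Where you diverge is in the rationality step, and your route is genuinely different and cleaner. The paper argues constructively: starting from a real optimal flow, it locates a cycle of edges carrying irrational values (using conservation and the fact that $|f|=QMC(\mathcal{N})\in\mathbb{Q}$), invokes Lemma~\ref{saturate} to ensure such edges are strictly interior (not on any minimum cut), and multiplies the flow around the cycle by a factor $(1+\epsilon)$ chosen to rationalize one edge while preserving feasibility and value; iterating eliminates all irrational assignments. Your argument instead observes that the log-space LP has a totally unimodular constraint matrix (internal rows of the node--edge incidence matrix together with $\pm I$ box rows), so by Cramer's rule any vertex optimum $h^*$ has each coordinate equal to an integer combination of the right-hand sides $\{0,\pm\log d_{e'}\}$; exponentiating yields $f^*_{u,v}=\prod_{e'} d_{e'}^{\,n_{u,v,e'}}\in\mathbb{Q}$. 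This bypasses the cycle-chasing entirely and does not require Lemma~\ref{saturate}. The paper's approach, on the other hand, is more self-contained (no appeal to TU or LP vertex structure) and yields a concrete algorithm for rationalizing a given optimal real flow.

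Two small points worth making explicit in your write-up: the log-space polytope is bounded (box constraints) and nonempty ($h=0$), so the maximum is attained at a vertex; and feasibility of that vertex immediately gives the capacity constraint $1/d_e\le f^*_e\le d_e$ after exponentiation.
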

\begin{proof}
We first prove that 
\begin{align*}
QMF_{\mathbb{R}}(\mathcal{N})=QMC(\mathcal{N}).
\end{align*}
To show this, we only need to provide a flow $f$ over $\mathbb{R}$ satisfies $|f|=QMC(\mathcal{N})$. We consider a classical flow network $\mathcal{M} = (G', c, S, T)$ from quantum tensor network $\mathcal{N} = (G, d, S, T)$ with undirected graph $G=(V,T)$: $G'=(V,E')$ is a directed graph where 
\begin{itemize}
\item $E'=E\cup \{(u,v)| (v,u)\in E,\ u\neq s, v\neq t\}$, 
\item the capacity $c_{u,v}=c_{v,u}=\log d_{u,v}, \forall (u,v)\in E \ \mathrm{or}\ (v,u)\in E$. 
\end{itemize}
In other words, the source $s$ has only outgoing edges, the sink has only incoming edges, and other edges in $\mathcal{N}$ are duplicated in both directions.

For any cut of $\mathcal{M}$, $V=\bar{S} \cup \bar{T}$ with $s \in \bar{S}$ and $t \in\bar{T}$ as defined in Definition \ref{classical-cut}. The value of this cut is
\begin{align*}
\mathop{\sum}\limits_{\begin{subarray}{c} u,v: (u,v)\in E' \\u\in \bar{S}, v\in \bar{T}\end{subarray}} \log d_{u,v}=\log(\mathop{\Pi}\limits_{\begin{subarray}{c} u,v: (u,v)\in E' \\u\in \bar{S}, v\in \bar{T}\end{subarray}}  d_{u,v}).
\end{align*}
It is clear to observe that the following relation between the min-cut of $\mathcal{M}$ and $\mathcal{N}$.
\begin{align*}
\exp(MC(\mathcal{M}))=QMC(\mathcal{N}).
\end{align*}

We consider classical flow over $\mathcal{M}$ as Definition \ref{classical-flow}. Theorem \ref{classicalmaxmin}, the max-flow min-cut theorem, implies that there exists a flow $h: E'\mapsto \mathbb{R}^+$ such that
\begin{align*}
0\leq& h_{u,v} \leq \log d_{u,v},\\
\sum_{ u : ( u , v ) \in E } h'_{u, v} =& \sum_{ w : ( v , w ) \in E } h'_{v, w}, \forall\ v\notin\{s,t\}\\
\mathop{\sum}\limits_{v:(s,v)\in E} h_{s,v}=&\min \mathop{\sum}\limits_{\begin{subarray}{c} u,v: (u,v)\in E \\u\in \bar{S}, v\in \bar{T}\end{subarray}} \log d_{u,v}.
\end{align*}
It is possible that there exist $u,v\in E'$ such that both $h_{u,v},h_{v,u}>0$. We can simplify the flow $h$ to $h'$ as following:
For $u,v$ such that both $h_{u,v},h_{v,u}>0$, we choose to let $h'_{u,v}=h_{u,v}-h_{v,u}$ and $h'_{v,u}=0$. We have
\begin{align*}
-\log d_{u,v}\leq& h'_{u,v}\leq\log d_{u,v}\\
\sum_{ u : ( u , v ) \in E } h_{u v} =& \sum_{ w : ( v , w ) \in E } h_{v, w},\forall\ v\notin\{s,t\} \\
\mathop{\sum}\limits_{v:(s,v)\in E} h'_{s,v}=&\min \mathop{\sum}\limits_{\begin{subarray}{c} u,v: (u,v)\in E \\u\in \bar{S}, v\in \bar{T}\end{subarray}} \log d_{u,v}.
\end{align*}
Now, we have for all $(u,v)\in E'$, $h'_{u,v}=0$  or  $h'_{v,u}=0$.

Define $f:E\mapsto \mathbb{R}^+$ where $f_{u,v}=\exp(h'_{u,v})$. According to the properties of $h'$, we know that $|f|=QMC(\mathcal{N})$ and $\forall (u,v)\in E'$,
\begin{align*}
\frac{1}{d_{u,v}}\leq& f_{u,v} \leq d_{u,v},\\
\mathop{\Pi}\limits_{ u : ( u , v ) \in E }f_{u, v} =& \mathop{\Pi}\limits_{ w : ( v , w ) \in E } f_{v, w}, \forall\ v\notin\{s,t\}\\
\mathop{\Pi}\limits_{v:(s,v)\in E} f_{s,v}=&\min \mathop{\Pi}\limits_{\begin{subarray}{c} u,v: (u,v)\in E \\u\in \bar{S}, v\in \bar{T}\end{subarray}}  d_{u,v}=QMC(\mathcal{N}),\\
f_{u,v}=&1\ \  \mathrm{or}\ \  f_{v,u}=1.
 \end{align*}
 In other words, $f$ is a flow over $\mathbb{R}$ in quantum tensor network $\mathcal{N}$ and satisfies
 \begin{align*}
 QMC(\mathcal{N})=|f|\leq QMF_{\mathbb{R}}(\mathcal{N}) \leq QMC(\mathcal{N}).
 \end{align*}
If $f$ is also a rational flow, then the statement of this lemma is already true.
 
Otherwise, there exists at least one $u,v\in V$ such that $f_{u,v}\notin \mathbb{Q}$. We observe that there exists $w\neq v$ such that $f_{w,u}\notin \mathbb{Q}$ or $f_{u,w}\notin \mathbb{Q}$. To see this, we consider three cases:

Case 1: $u\notin \{s,t\}$: Since $f_{u,v}$ is a flow, according to the conservation condition on $u$
\begin{align}
\mathop{\Pi}\limits_{w:(w,u)\in E} {f_{w,u}} =\mathop{\Pi}\limits_{x:(u,x)\in E} f_{u,x}.
\end{align}
The product of the right handside $\mathop{\Pi}\limits_{x:(u,x)\in E} f_{u,x}$ contains $f_{u,v}\notin \mathbb{Q}$. If the product $\mathop{\Pi}\limits_{x:(u,x)\in E} f_{u,x}\in \mathbb{Q}$, then there is an $w\neq v$ such that  $f_{u,w}\notin \mathbb{Q}$; If the product $\mathop{\Pi}\limits_{x:(u,x)\in E} f_{u,x}\notin \mathbb{Q}$, then there is an $w\neq v$ such that  $f_{w,u}\notin \mathbb{Q}$.

Case 2: $u=s$. According to
\begin{align*}
|f|=\mathop{\Pi}\limits_{v:(s,v)\in E} f_{s,v}=QMC(\mathcal{N})\in \mathbb{Q},
\end{align*}
there exists $w\not v$ such that $f_{s,w}\notin \mathbb{Q}$.
 
Case 3: $u=t$. According to Lemma \ref{flow}, we have 
\begin{align*}
|f|=\mathop{\Pi}\limits_{v:(v,t)\in E} f_{v,t}=QMC(\mathcal{N})\in \mathbb{Q}.
\end{align*}
Thus, there exists $w\not v$ such that $f_{w,t}\notin \mathbb{Q}$.

By repreating this procedure, we can obtain a circle $u_0,\cdots,u_m$ since the graph $G$ is finite. More precisely, we have
\begin{align*}
f_{u_0,u_{m}}\notin \mathbb{Q}\ \mathrm{or}\ f_{u_{m},u_0}\notin \mathbb{Q}\ \ 
f_{u_i,u_{i+1}}\notin \mathbb{Q} \ \mathrm{or}\  f_{u_{i+1},u_i}\notin \mathbb{Q}  \ \ \forall 0\leq i\leq m-1
\end{align*}

For any $(y,z)\in E$, we can define $f'$ by 
\begin{align*}
f'_{y,z}=\frac{1}{f_{z,y}},\\
f'_{z,y}=\frac{1}{f_{y,z}}
\end{align*}
and leaving the rest of $f$ unchanged. 
By using this method to modify $f_{u_i,u_{i+1}}$ and $f_{u_{i+1},u_i}$ for each $i$, we can obtain a flow $f':E''\mapsto $ satisfies
\begin{itemize}
\item  $f_{v,u}=1$ or $f_{u,v}=1$ for all $(u,v)\in E'$.
\item  $\frac{1}{d_{u,v}}\leq f_{v,u}, f_{u,v}\leq d_{u,v}$ for all $(u,v)\in E'$.
\item  $\mathop{\Pi}\limits_{u:(u,v)\in E} {f_{u,v}} =\mathop{\Pi}\limits_{u:(v,u)\in E} f_{v,u}$ holds $\forall v\in V\setminus\{s,t\}$.
\item $|f'|=QMC(\mathcal{N})$.
\item  $u_0,u_1,\cdots, u_m$ such that $f'_{u_i,u_{i+1}},f'_{u_{m},u_0}\notin \mathbb{Q}$ for all $0\leq i\leq m-1$. 
\end{itemize}
where $E''=E\cup \{(u,v)| (v,u)\in E\}$. In otherwords, we allow incoming edges of $s$ and outgoing edges of $t$.
According to lemma \ref{saturate}, we know that any edge of $(u_m,u_0),(u_0,u_m),(u_i,u_{i+1}),(u_{i+1},u_i)$ are not in any minmimum cut.

Then, for all $0\leq i\leq m-1$. 
\begin{align*}
\frac{1}{d_{u_i,u_{i+1}}}<f'_{u_i,u_{i+1}}<d_{u_i,u_{i+1}}\\
\frac{1}{d_{u_m,u_0}}<f'_{u_m,u_0}<d_{u_m,u_0}.
\end{align*}

By choose some $\epsilon$ such that $(1+\epsilon)f'_{u_0,u_{1}}\in\mathbb{Q}$, we design $f'': E''\mapsto \mathbb{R}^+$ by
\begin{align*}
\frac{1}{d_{u_i,u_{i+1}}}<f''_{u_i,u_{i+1}}:=(1+\epsilon)f'_{u_i,u_{i+1}}<d_{u_i,u_{i+1}}\\
\frac{1}{d_{u_m,u_0}}<f''_{u_m,u_0}:=(1+\epsilon)f'_{u_m,u_0}<d_{u_m,u_0}.
\end{align*}
and leave the rest of $f'$ unchange.
One can verify that $f''$ is still a flow over $\mathbb{R}$ with $|f''|=QMC(\mathcal{N})$ according to Lemma \ref{flow}.

Moreover, each run of this procedure decreases the number of non-rational edge asignement by at least $1$. 

By repeating this procedure, we can obtain a flow $g: E''\mapsto \mathbb{Q}$ such that $|g|=QMC(\mathcal{N})$, and $\forall (u,v)\in E$,
\begin{align*}
\frac{1}{d_{u,v}}\leq f_{u,v} \leq d_{u,v},\\
f_{u,v}=1\ \  \mathrm{or}\ \  f_{v,u}=1.
 \end{align*}
 This proves
 \begin{align*}
 QMF_{\mathbb{Q}}(\mathcal{N})=QMC(\mathcal{N}).
 \end{align*}
\end{proof}

Using Lemma \ref{rational}, we are able to prove the following lemma.
\begin{lem}\label{main}
For any $\mathcal{N} = (G, d, S, T)$, there are infinitely many $n\in\mathbb{N}$ such that
\begin{align*}
QMF(n\mathcal{N})=QMC(n\mathcal{N}).
\end{align*}
\end{lem}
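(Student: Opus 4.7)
The plan is to convert the rational flow from Lemma~\ref{rational} into an integer flow on $n\mathcal{N}$ whose value matches $QMC(n\mathcal{N})$ for an infinite family of $n$. First I would apply Lemma~\ref{rational} to get a rational flow $g$ on $\mathcal{N}$ with $|g|=QMC(\mathcal{N})$, and let $D$ denote the least common multiple of the denominators of the $g_{u,v}$. Setting $\tilde{f}_{u,v}:=Dg_{u,v}\in\mathbb{N}$ gives an integer valued map: the powers of $D$ cancel on either side of each interior vertex's product equation so conservation transfers, the capacity becomes $\tilde{f}_{u,v}\tilde{f}_{v,u}=D^{2}g_{u,v}g_{v,u}\le D^{2}d_{u,v}$, and Lemma~\ref{flow} evaluated at any minimum cut $C$ of $\mathcal{N}$ together with the saturation relation $g_{v,w}/g_{w,v}=d_{v,w}$ from Lemma~\ref{saturate} give $|\tilde{f}|=\prod_{(v,w)\in C}d_{v,w}=QMC(\mathcal{N})$. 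Hence $\tilde{f}$ is already a legal integer flow on $D^{2}\mathcal{N}$ of value $QMC(\mathcal{N})$.

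Next, to reach the much larger quantity $QMC(n\mathcal{N})=\min_{C}n^{|C|}D_{C}$---which for large $n$ equals $n^{\kappa}D^{\ast}$ with $\kappa$ the minimum cardinality of an $s$--$t$ cut and $D^{\ast}=\min_{|C|=\kappa}D_{C}$---I would augment $\tilde{f}$ by path-based multiplicative flows. By Menger's theorem there exist $\kappa$ edge-disjoint $s$--$t$ paths $P_{1},\ldots,P_{\kappa}$; on each $P_{i}$ I overlay a flow that carries integer value $n\cdot q_{i}$ in the path direction and $1$ in the reverse, and on finitely many further $s$--$t$ paths $P_{\kappa+1},\ldots,P_{\kappa+M}$ I overlay constant integer flows $q_{j}$. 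Each path's factor appears symmetrically in the incoming and outgoing products at every interior vertex it visits, so conservation is preserved; Lemma~\ref{flow} applied at a minimum cardinality cut of $n\mathcal{N}$ then shows that the value of the combined flow is $n^{\kappa}\prod q_{i}\prod q_{j}$. Choosing the integer $q$'s so that their total product equals $D^{\ast}$ gives the target value $n^{\kappa}D^{\ast}=QMC(n\mathcal{N})$ for all sufficiently large $n$.

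The hard part is the combinatorial feasibility of the integer factorisation of $D^{\ast}$ subject to the capacity constraint: on an edge of a bulk path $P_{i}$ one must have $q_{i}\prod_{j:e\in P_{j}}q_{j}\le d_{e}$, independent of $n$, while on edges used only by extra paths the constraint $\prod_{j:e\in P_{j}}q_{j}\le nd_{e}$ is trivial for $n$ large enough. The rational flow $g$ already encodes a valid rational distribution of factors along $s$--$t$ paths (its logarithm is a real classical max flow on the log-capacity network, which admits a path decomposition), so clearing denominators turns the rational factors into integer $q$'s; the residual growth needed to absorb these denominators restricts the admissible $n$ to an arithmetic progression whose common difference depends only on $D$ and the combinatorial structure of $\mathcal{N}$. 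This produces infinitely many $n$ for which the constructed integer flow is valid and saturates $QMC(n\mathcal{N})$, giving $QMF(n\mathcal{N})=QMC(n\mathcal{N})$ as required.
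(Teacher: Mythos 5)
Your skeleton matches the paper's (rational flow from Lemma~\ref{rational}, edge-disjoint paths from the integrality part of Theorem~\ref{classicalmaxmin}, a multiplicative boost along those paths, and an arithmetic progression of admissible $n$), but three steps do not close. First, you invoke Lemma~\ref{rational} on $\mathcal{N}$ itself, so $|g|=QMC(\mathcal{N})$, whereas the quantity you must realize for large $n$ is $n^{\kappa}D^{*}$ with $D^{*}=\min_{|C|=\kappa}D_C$; in general $QMC(\mathcal{N})\neq D^{*}$ (e.g.\ a one-edge cut of capacity $100$ against a two-edge cut of capacities $2,2$), so the factor distribution encoded by $g$ is the wrong one. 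The paper first fixes $n_0$ beyond which the minimum cut of $n\mathcal{N}$ stabilizes to a minimum-cardinality cut $C$ and applies Lemma~\ref{rational} to $n_0\mathcal{N}$, so that $|g|=n_0^{|C|}D^{*}$ already carries the correct product and no separate factorization of $D^{*}$ is ever needed. Second, your value bookkeeping is inconsistent: if the path flows are overlaid multiplicatively on $\tilde f$ the value is $QMC(\mathcal{N})\cdot n^{\kappa}\prod q$, not $n^{\kappa}\prod q$; if $\tilde f$ is discarded, nothing ties $\prod q$ to $D^{*}$.

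The third gap is the one you yourself flag as ``the hard part,'' and it is not resolved: choosing integers $q_i$ with $\prod q_i=D^{*}$ and $q_i\prod_{j}q_j\le d_e$ on every tight edge is precisely the monomial integer program the theorem is about, and ``clearing denominators'' of a path decomposition of $\log g$ changes the product away from the target. The paper sidesteps this with two devices you are missing: (i) by Lemma~\ref{saturate}, the stabilized rational flow already takes the integer value $n_0 d_e$ on every edge of the tight cut, so no rounding is needed exactly where there is no slack; (ii) on off-path edges it does not scale at all, but writes $g'_{u,v}=p/q$ and sets $f_{u,v}=p$, $f_{v,u}=q$, which preserves the ratio on that edge (hence the value, via Lemma~\ref{flow}) while being integral, the product $pq$ being absorbed by the enlarged capacity $k n_0 m_0\, d_{u,v}$ once $k$ is large. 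The factor $km_0$ is applied only along the $|C|$ disjoint paths, each of which crosses the cut once, so the value is boosted by exactly $(km_0)^{|C|}$ and lands on $QMC(kn_0m_0\mathcal{N})$ on the nose. Without these devices your construction neither verifies the capacity constraint on the cut edges nor pins down the final value.
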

\begin{proof}
For any cut $\bar{S}\cup\bar{T}$, the value of the cut in $n\mathcal{N}$ is
\begin{align*}
n^{|(\bar{S}\times\bar{T})\cap E|}\times \mathop{\Pi}\limits_{(u,v)\in (\bar{S}\times\bar{T})\cap E } d_{u,v},
\end{align*}
where $|(\bar{S}\times\bar{T})\cap E|$ denotes the size of $(\bar{S}\times\bar{T})\cap E$.

Then, there exists $n_0$ such that for any $n\geq n_0$, a minimum cut of $n\mathcal{N}$ is also a minimum cut of $n_0\mathcal{N}$. Let us choose a minimum cut of $n_0{\mathcal{N}}$, denoted by $C:= (\bar{S}_0\times\bar{T}_0)\cap E$. Therefore, for $n\geq n_0$,
\begin{align*}
QMC(n\mathcal{N})= n^{|C|}\times\mathop{\Pi}\limits_{(u,v)\in C} d_{u,v},
\end{align*}
with $|C|$ being the size of $C$.

According to Lemma \ref{rational}, there exists a flow $g: E\mapsto \mathbb{Q}$ satisfies $|g|=QMC(n_0\mathcal{N})$ and $\forall (u,v)\in E$,
\begin{align*}
\frac{1}{d_{u,v}}\leq g_{u,v},g_{v,u} \leq d_{u,v},\\
g_{u,v}=1\ \ \mathrm{or}\ \ g_{v,u}=1.
\end{align*}
Let $m_0\in \mathbb{N}$ satisfies that for all $(u,v)\in E$
\begin{align*}m_0 g_{u,v},m_0 g_{v,u}, \frac{m_0}{g_{u,v}}, \frac{m_0}{g_{v,u}}\in\mathbb{N}.
\end{align*}

In the following, we will prove that for any sufficient large $k\in\mathbb{N}$, there exists a flow $f: E\mapsto \mathbb{N}$ of $kn_0m_0\mathcal{N}$ such that
\begin{align*}
|f|=(kn_0m_0)^{|C|}\mathop{\Pi}\limits_{(u,v)\in C} d_{u,v} =QMC(kn_0m_0\mathcal{N}).
\end{align*}

We first define a classical flow network $\mathcal{M}=(G,d',S,T)$ from $\mathcal{N}$ with $d'_{u,v}=1$ if $(u,v)\in E$. In such flow network as Definition \ref{classical-flow}, the min-cut of $\mathcal{M}$ is $|C|$.

The integrality theorem of Theorem \ref{classicalmaxmin} states that if all arc capacities are integer, the maximum flow problem has an integer maximum flow. 

Applying the integrality theorem on $\mathcal{M}$, we know that there is a flow $h:E\mapsto \{0,1,-1\}$ with value $|C|$. We can first simplify that into an equivalent flow $h':E\mapsto \{0,1\}$. In other words, we can choose $|C|$ disjoint paths, consisting of undirected edges, from $s$ to $t$ where we call two paths are disjoint if they do not share edges. We can assign directions for these
paths, from $s$ to $t$. Now we have $|C|$ disjoint paths from $s$ to $t$. We use $P$ to denoted the set of directed edges in these paths,
\begin{align*}
P=\{(u,v)| (u,v)\   \mathrm{belongs \ to\  some\  path}. \}
\end{align*}

We use the information of $P$ to update the flow $g: E\mapsto \mathbb{Q}$.

We know that for any directed edge $(u,v)\in E$, if $g_{u,v}=1$, we can change the assignement of $g_{u,v}$ and $g_{v,u}$ by letting $g'_{u,v}=\frac{1}{g_{v,u}}$, and $g'_{v,u}=1$ and do not change the rest. $g'$ is still a rational flow and $|g'|=QMC(n_0\mathcal{N})$.

For any sufficiently large $k\in\mathbb{N}$, we construct a flow $f: E\mapsto \mathbb{N}$ of $kn_0m_0\mathcal{N}$.

If the directed edge $(u,v)\in P$, we let
\begin{align*}
f_{u,v}=km_0 g'_{u,v}\ \ f(v,u)=1.
\end{align*}
For $(u,v)\in E\setminus P$, and $ g'_{u,v}=\frac{p}{q}$, $ g'_{v,u}=1$, we let
\begin{align*}
f_{u,v}=p,\ \ f_{v,u}=q.
\end{align*}

One can cerify that $f$ is a flow by observing
\begin{itemize}
\item $f_{u,v}\in\mathbb{N}$ since $m_0 g'_{u,v}\in\mathbb{N}$.
\item Capacity constraint: $km_0g'_{u,v}\leq km_0n_0d_{u,v}$ and $pq\leq kn_0m_0 d_{u,v}$.
\item The conservation condition is preserved during the process: $\mathop{\Pi}\limits_{u:(u,v)\in E} {f_{u,v}} =\mathop{\Pi}\limits_{w:(v,w)\in E} f_{v,u}$ for $v\notin\{s,t\}$.
\end{itemize}
The conservation condition follows from the following observation: If $v$ is not in $P$, then the conservation condition follows from the conservation condition of $g'$; If $v$ is in $P$, then 
\begin{align*}
\mathop{\Pi}\limits_{u:(u,v)\in E} {f_{u,v}} =\mathop{\Pi}\limits_{w:(v,w)\in E} f_{v,u}
\end{align*}
can be obtained from
\begin{align*}
\mathop{\Pi}\limits_{u:(u,v)\in E} {g'_{u,v}} =\mathop{\Pi}\limits_{w:(v,w)\in E} g'_{v,u}.
\end{align*}
by multiplying $(km_0)^r$ from each handsides for approperate integer $r$ due to the face that $P$ is consisting of directed paths.

One can verify that
\begin{align*}
&|f|\\
=&\frac{\mathop{\Pi}\limits_{v:(s,v)\in E} f_{s,v}}{\mathop{\Pi}\limits_{u:(u,s)\in E} f_{u,s}}\\
=&\mathop{\Pi}\limits_{v:(s,v)\in P}\frac{f_{s,v}}{f_{v,s}}\times \mathop{\Pi}\limits_{v:(s,v)\in E\setminus P}\frac{ f_{s,v}}{ f_{u,s}}\\
=&\mathop{\Pi}\limits_{v:(s,v)\in P}{f_{s,v}}\times \mathop{\Pi}\limits_{v:(s,v)\in E\setminus P}\frac{ f_{s,v}}{ f_{v,s}}\\
=&\mathop{\Pi}\limits_{v:(s,v)\in P} km_0 g'_{s,v}\times \mathop{\Pi}\limits_{v:(s,v)\in E\setminus P} g'_{s,v}\\
=&(km_0)^{|C|}\mathop{\Pi}\limits_{v:(s,v)\in P} g'_{s,v}\times \mathop{\Pi}\limits_{v:(s,v)\in E\setminus P} g'_{s,v}\\
=&(km_0)^{|C|}\mathop{\Pi}\limits_{v:(s,v)\in E} g'_{s,v}\\
=&(km_0)^{|C|} |g'|\\
=&(km_0)^{|C|}QMC(n_0\mathcal{N})\\
=&QMC(km_0n_0\mathcal{N}).
\end{align*}
This complete the proof.
\end{proof}

\begin{lem}\label{main-2}
Given quantum tensor network $\mathcal{N} = (G, d, S, T)$, if there exist a flow $f:E\mapsto \mathbb{N}$ such that
\begin{align*}
|f|=QMC(\mathcal{N}),
\end{align*}
then there exists a strict flow $g:E\mapsto \mathbb{N}$ such that
\begin{align*}
|g|=QMC(\mathcal{N}).
\end{align*}
\end{lem}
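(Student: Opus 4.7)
The plan is to linearize the multiplicative flow constraints by taking $p$-adic valuations prime by prime, reducing the problem to a family of classical non-negative integer flows, and then projecting onto the ``pure path'' component via the integer flow decomposition theorem; this projection is automatically strict because simple $s$-$t$ directed paths cannot traverse edges incident to $s$ or $t$ in the wrong direction. Working on the set $\tilde E$ of directed edges on which $f$ naturally lives (each undirected edge of $G$ giving rise to two directed edges), I would define $\phi^{(p)}(u,v) := v_p(f_{u,v})$ for each prime $p$, where $v_p$ denotes the $p$-adic valuation. These are non-negative integers, nonzero for only finitely many $p$, and applying $v_p$ to the multiplicative conservation $\prod_u f_{u,v} = \prod_u f_{v,u}$ at any internal $v$ gives the additive law $\sum_u \phi^{(p)}(u,v) = \sum_u \phi^{(p)}(v,u)$. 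Hence each $\phi^{(p)}$ is a non-negative integer classical flow on $\tilde E$ with value $\sum_v \phi^{(p)}(s,v) - \sum_v \phi^{(p)}(v,s) = v_p(|f|) = v_p(QMC(\mathcal{N}))$.

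Next, I would invoke classical integer flow decomposition, which follows from the integrality half of Theorem \ref{classicalmaxmin}, to write $\phi^{(p)} = \phi^{(p)}_{\mathrm{path}} + \phi^{(p)}_{\mathrm{circ}}$, where $\phi^{(p)}_{\mathrm{path}} = \sum_P \alpha_P \chi_P$ is a non-negative integer combination of indicators of simple $s$-$t$ directed paths with $\sum_P \alpha_P = v_p(QMC(\mathcal{N}))$, and $\phi^{(p)}_{\mathrm{circ}}$ is a non-negative integer circulation. The crucial observation is that no simple $s$-$t$ path can use an edge $(v,s)$ (whose head is $s$, so $s$ would be revisited) or $(t,u)$ (whose tail is $t$, so $t$ would not be the endpoint), whence $\phi^{(p)}_{\mathrm{path}}(v,s) = \phi^{(p)}_{\mathrm{path}}(t,u) = 0$ for every $v,u$ and every $p$. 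I would then assemble $g_{u,v} := \prod_p p^{\phi^{(p)}_{\mathrm{path}}(u,v)}$, a positive integer given by a finite product.

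The verifications are short and essentially prime-by-prime. Capacity follows from $\phi^{(p)}_{\mathrm{path}} \leq \phi^{(p)}$ edge-wise, so $g_{u,v}\, g_{v,u} \leq f_{u,v}\, f_{v,u} \leq d_{u,v}$; conservation of $g$ at internal vertices follows by taking logarithms and applying conservation of each $\phi^{(p)}_{\mathrm{path}}$; strictness $g_{v,s} = g_{t,u} = 1$ is immediate from the purity observation; and $|g| = \prod_v g_{s,v} = \prod_p p^{v_p(QMC(\mathcal{N}))} = QMC(\mathcal{N})$. The main obstacle, I expect, is not any individual calculation but recognizing that the prime-adic lens is the right one: a more naive approach of repeatedly cancelling cycles in the support $\{e : f_e > 1\}$ by dividing through by their $\gcd$ stalls whenever the cycle's values are pairwise coprime, whereas splitting prime by prime decouples these obstructions entirely.
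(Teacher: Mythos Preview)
Your proof is correct and shares the paper's key idea: linearize the multiplicative flow via $p$-adic valuations, then strip out circulations prime by prime. The paper carries out the circulation removal by hand---starting from an offending edge $(v,s)$ or $(t,u)$, it traces backward through the additive flow $\alpha^{(q)}$ until a cycle appears, subtracts $1$ along that cycle, and iterates---whereas you invoke the integer flow decomposition theorem once to peel off the entire circulation component in one stroke. One substantive difference: the paper uses the hypothesis $|f|=QMC(\mathcal{N})$ to argue that the min-cut is saturated, hence $\alpha_{q:w,v}=0$ on reverse-cut edges, hence the backward search never reaches $\bar{T}$ and in particular never reaches $t$ (where conservation fails and the search would break); your argument needs no such detour, since flow decomposition applies to any non-negative integer $s$-$t$ flow of non-negative value, and the hypothesis is used only at the end to identify $|g|$ with $QMC(\mathcal{N})$. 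So your route is a mild streamlining of the same proof, trading an explicit cycle-chasing argument for an appeal to a standard decomposition lemma.
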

\begin{proof}
The difference between flow and strict flow is that in the strict flow, there is no incoming flow of the source $s$ and no outgoing flow of the sink $t$.

Let us fix a minimum cut of $\mathcal{N}$, denoted by $C:= (\bar{S}_0\times\bar{T}_0)\cap E$. According to Lemma \ref{flow},
\begin{align*}
QMC(\mathcal{N})=|f|=\frac{\mathop{\Pi}\limits_{\begin{subarray}{c} (v,w)\in E\\v\in \bar{S}, w\in \bar{T}\end{subarray}} f_{v,w}}{\mathop{\Pi}\limits_{\begin{subarray}{c} (r,u)\in E\\ u\in \bar{S}, r\in \bar{T}\end{subarray}} f_{r,u}}\leq \mathop{\Pi}\limits_{\begin{subarray}{c} (v,w)\in E\\v\in \bar{S}, w\in \bar{T}\end{subarray}} f_{v,w}= QMC(\mathcal{N})
\end{align*}
Thus, for directed edge $(v,w)\in C$
\begin{align*}
f_{v,w}=d_{v,w} \ \mathrm{and} \ f_{w,v}=1.
\end{align*}
Consider the integer $\Pi_{(u,v)\in E} f_{u,v}$. According to the fundamental theorem of arithmetic, we have
\begin{align*}
\Pi_{(u,v)\in E} f_{u,v}=\Pi_{i=1}^r p_i^{n_i}
\end{align*}
for prime numbers $p_1<p_2<\cdots<p_r$.

Consider the conservation of flow $f$: $\forall v\in V\setminus\{s,t\}$:
\begin{align}
\mathop{\Pi}\limits_{u:(u,v)\in E} {f_{u,v}} =\mathop{\Pi}\limits_{w:(v,w)\in E} f_{v,w}.
\end{align}
For any $p\in \{p_1,\cdots,p_r\}$, we define
\begin{align*}
p^{\alpha_{u,v}}\mid f_{u,v} \ \ \mathrm{and}\  p^{\alpha_{u,v}+1}\nmid f_{u,v}.
\end{align*}
the conservation of flow $f$ implies that
\begin{align*}
\mathop{\sum}\limits_{u:(u,v)\in E}\alpha_{p:u,v}=\mathop{\sum}\limits_{w:(v,w)\in E}\alpha_{p:v,w}.
\end{align*}

In other words, for any $p\in \{p_1,\cdots,p_r\}$, we obtain a classical flow network $\mathcal{N}_p=(G,\alpha, S,T)$, where $G=(V,E)$ is a directed graph such that for $(u,v)\in E$, we have $(v,u)\notin E$ for all $u,v \in V$. Moreover, this $\mathcal{N}_p$ is already a flow.

If $f_{v,s}=f_{t,u}=1$ for all $(s,v),(u,t)\in E$, $f$ is already strict. Let $g=f$ is enough.

Otherwise, there exists $f_{v,s}>1$ or $f_{t,u}>1$. We will construct $g$ according to $f$.

Case 1: There exists $f_{v,s}>1$. Let $f_{v,s}=\Pi_{j=1}^b q_j^{a_j}$ for distinct prime numbers $q_1<\cdots<q_b$ and $a_j\in\mathbb{N}_{\geq 1}$. It is clear $\{q_1,\cdots,q_b\}\subseteq\{p_1,\cdots,p_r\}$.

For any $q\in \{q_1,\cdots,q_b\}$, let us consider the classical flow network $\mathcal{N}_q=(G,\alpha, S,T)$. The definition of $q$ implies that $(v,s)$ an incoming edge in this $\mathcal{N}_q$ with $1\leq \alpha_{q:v,s}\in \mathbb{N}$. We focus on amount $1$ of this edge $(v,s)$, according to the conservation condition of classical network, there exists an edge $(w,v)$ such that $1\leq \alpha_{q:w,v}\in \mathbb{N}$. Repeat this procedure, we must reach a circle $u_0,\cdots,u_m$ such that for all $0\leq i\leq m-1$,
\begin{align*}
1\leq \alpha_{q:u_m,u_0}\in \mathbb{N}\\
1\leq \alpha_{q:u_i,u_{i+1}}\in \mathbb{N}
\end{align*}
Now, we let
\begin{align*}
\alpha'_{q:u_m,u_0}=\alpha_{q:u_m,u_0}-1\\
\alpha'_{q:u_i,u_{i+1}}=\alpha_{q:u_i,u_{i+1}}-1.
\end{align*}
and leave the rest of $\alpha_{q}$ unchanged.
It is clear that this $\alpha'_{q}$ still forms a valid flow, and
\begin{align*}
\alpha'_{q:u,v}\leq \alpha_{q:u,v}.
\end{align*}
We can use $\alpha'$ instead of $\alpha$ to reconstruct a quantum flow $f'$ for $\mathcal{N}$:
\begin{align*}
f'_{u,v}=f_{u,v}\times q^{\alpha'_{q:u,v}-\alpha_{q:u,v}}.
\end{align*}
The conservation condition is still valid, and the capacity constraint is valid.

According to the discussion of minimum cut, we know $\alpha_{v,w}=0$ for f any directed edge $(v,w)\in C$.
Therefore, $u_0,\cdots,u_m\in\bar{S}$. Therefore,
\begin{align*}
|f'|=|f|.
\end{align*}

Repeat this procedure for each prime factor of each $f_{v,s}>1$, until there is no more circle. We have reach a strict flow $h$ such that
\begin{align*}
|h|=|f|=QMC(\mathcal{N}).
\end{align*}

Case 2: There exists $f_{t,u}>1$. The method is similar to Case 1.

This complete the proof of this lemma.

\end{proof}

According to Lemma \ref{main} and Lemma \ref{main-2}, we have
\begin{thm}\label{qqmaxmin}
For any $\mathcal{N} = (G, d, S, T)$, there are infinitely many $n\in\mathbb{N}$ such that
\begin{align*}
QMF_s(n\mathcal{N})=QMC(n\mathcal{N}).
\end{align*}
\end{thm}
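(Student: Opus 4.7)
The plan is to combine Lemma \ref{main} and Lemma \ref{main-2} in a direct way. Lemma \ref{main} gives infinitely many dimensions $n$ for which the ordinary integer quantum max-flow $QMF(n\mathcal{N})$ saturates the quantum min-cut $QMC(n\mathcal{N})$; Lemma \ref{main-2} then lets us upgrade any integer flow that attains the min-cut into a \emph{strict} integer flow with the same value, which is exactly what is needed to control $QMF_s$.

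More concretely, I would proceed as follows. First, invoke Lemma \ref{main} to fix an infinite set $\mathcal{I}\subseteq\mathbb{N}$ such that for every $n\in\mathcal{I}$ there exists an integer flow $f_n:E\mapsto\mathbb{N}$ on $n\mathcal{N}$ with $|f_n|=QMC(n\mathcal{N})$. Next, for each such $n$, apply Lemma \ref{main-2} to the network $n\mathcal{N}$ and the flow $f_n$: this produces a strict integer flow $g_n:E\mapsto\mathbb{N}$ on $n\mathcal{N}$, satisfying $g_{n;v,s}=g_{n;t,u}=1$ for all $(s,v),(u,t)\in E$, and with value $|g_n|=QMC(n\mathcal{N})$. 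By the definition of $QMF_s$, this gives the lower bound $QMF_s(n\mathcal{N})\geq |g_n|=QMC(n\mathcal{N})$.

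For the matching upper bound, observe that every strict flow is in particular a flow, and every integer flow is a flow over $\mathbb{Q}$ (equivalently over $\mathbb{R}$). Hence
\begin{align*}
QMF_s(n\mathcal{N})\leq QMF(n\mathcal{N})\leq QMF_{\mathbb{Q}}(n\mathcal{N})\leq QMC(n\mathcal{N}),
\end{align*}
where the last inequality is exactly the content of the lemma preceding Lemma \ref{rational}. Combining the two bounds yields $QMF_s(n\mathcal{N})=QMC(n\mathcal{N})$ for every $n\in\mathcal{I}$, and since $\mathcal{I}$ is infinite this proves the theorem.

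The conceptual work has already been done in the two preceding lemmas, so there is no genuine obstacle left; the only subtlety is making sure that the flow produced by Lemma \ref{main-2} is actually strict (i.e.\ that Case 1 and Case 2 together eliminate every backward edge from the source and every forward edge from the sink), and that the value is preserved along the way. Both of these are built into the statement of Lemma \ref{main-2}, so the argument reduces to the chain of inequalities above.
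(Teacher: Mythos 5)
Your decomposition is exactly the paper's: its proof of Theorem \ref{qqmaxmin} is literally the concatenation of Lemma \ref{main} (infinitely many $n$ admitting an integer flow of value $QMC(n\mathcal{N})$) and Lemma \ref{main-2} (upgrading such a flow to a strict one of the same value), so your lower bound $QMF_s(n\mathcal{N})\geq QMC(n\mathcal{N})$ is obtained the same way the paper obtains it.

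The one step that does not go through as written is the link $QMF(n\mathcal{N})\leq QMF_{\mathbb{Q}}(n\mathcal{N})$ in your upper-bound chain. Under the paper's definitions an integer flow (Definition \ref{quantum-flow}) is \emph{not} automatically a flow over $\mathbb{Q}$: a flow over $\mathbb{F}$ must satisfy the single-direction condition ($f_{u,v}=1$ or $f_{v,u}=1$ on every edge), whereas an integer flow only requires $1\leq f_{u,v}f_{v,u}\leq d_{u,v}$, so both directions may exceed $1$ on internal edges; strictness only forces the condition at $s$ and $t$. The inequality you actually need, $QMF_s(n\mathcal{N})\leq QMC(n\mathcal{N})$, is nonetheless true and easy to prove directly: the telescoping identity of Lemma \ref{flow} uses only the conservation condition, hence applies verbatim to integer flows, and across a minimum cut $C$ each factor satisfies
\begin{align*}
\frac{f_{v,w}}{f_{w,v}}\leq f_{v,w}\leq f_{v,w}f_{w,v}\leq d_{v,w},
\end{align*}
since $f_{w,v}\geq 1$, whence $|f|\leq \prod_{(v,w)\in C} d_{v,w}=QMC(n\mathcal{N})$. (Alternatively, replace each pair $(f_{u,v},f_{v,u})$ with $(f_{u,v}/f_{v,u},\,1)$ when $f_{u,v}\geq f_{v,u}$; this yields a genuine $\mathbb{Q}$-flow of the same value and restores your chain.) With that repair your argument is complete and coincides with the paper's; note the paper itself leaves this upper bound implicit.
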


\section{Operational meaning of the quantum maximum flow minimum cut theorem}\label{om}

In this section, we provide an operational interpretation of the quantum maximum flow minimum cut theorem
in last section.
\begin{thm}\label{opt}
Given quantum tensor network $\mathcal{N} = (G, d, S, T)$, if 
\begin{align*}
QMF_s(\mathcal{N})=QMC(\mathcal{N}),
\end{align*}
then there exists a teleportation protocol which transmits $QMC(\mathcal{N})$ dimensional quantum system from the source to the sink.
\end{thm}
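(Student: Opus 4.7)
The plan is to realize the integer strict flow $f$ satisfying $|f|=QMC(\mathcal{N})$ as an LOCC protocol in which each arc label $f_{u,v}$ is the dimension of quantum information teleported from $u$ to $v$. The structural fact driving everything is that conservation of flow at each interior vertex equates the product of incoming dimensions with the product of outgoing dimensions, so every interior vertex acts as a lossless router that re-factors its received Hilbert space and re-transmits, while strictness $f_{v,s}=f_{t,u}=1$ ensures that all flow enters at $s$ and exits at $t$.

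First I would decompose the edge resources. On each edge $e=(u,v)\in E$ the parties share $\ket{\psi_e}=\frac{1}{\sqrt{d_{u,v}}}\sum_{i=1}^{d_{u,v}}\ket{ii}$, and the capacity constraint $f_{u,v}f_{v,u}\leq d_{u,v}$ lets both parties project onto a common subspace of dimension $f_{u,v}f_{v,u}$; on the surviving branch the post-measurement state is maximally entangled of that dimension, and under the identification $\mathbb{C}^{f_{u,v}f_{v,u}}\cong\mathbb{C}^{f_{u,v}}\otimes \mathbb{C}^{f_{v,u}}$ it splits into two max-entangled pairs, one reserved for teleportation in direction $u\to v$ and one for $v\to u$.

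I would then run the teleportation cascade. Strictness and the definition of $|f|$ give $\prod_{v:(s,v)\in E}f_{s,v}=QMC(\mathcal{N})$, so the $QMC(\mathcal{N})$-dimensional register at $s$ factors as $\bigotimes_{v:(s,v)\in E}Q_v$ with $\dim Q_v=f_{s,v}$, and each $Q_v$ is teleported to $v$ using the extracted $f_{s,v}$-pair on $(s,v)$. At an interior vertex $v$, once all incoming teleportations are complete, $v$ holds a system of dimension $\prod_{u:(u,v)\in E}f_{u,v}$; flow conservation equates this with $\prod_{u:(v,u)\in E}f_{v,u}$, so $v$ re-factors the system as $\bigotimes_u R_{v,u}$ with $\dim R_{v,u}=f_{v,u}$ and teleports each $R_{v,u}$ along the arc $(v,u)$. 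Lemma~\ref{flow} applied to the cut $\bar S=V\setminus\{t\}$, together with $f_{t,u}=1$, gives $\prod_{u:(u,t)\in E}f_{u,t}=|f|=QMC(\mathcal{N})$, so the sink ends up holding the transmitted $QMC(\mathcal{N})$-dimensional state.

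The main obstacle I expect is divisibility: $f_{u,v}f_{v,u}$ need not divide $d_{u,v}$, so the subspace projection in the decomposition step is genuinely stochastic, succeeding with overall probability $\prod_{(u,v)\in E}(f_{u,v}f_{v,u}/d_{u,v})>0$ rather than deterministically. A secondary issue is the scheduling of the cascade when bidirectional arcs with $f_{u,v},f_{v,u}>1$ create potential circular dependencies among vertices; this is resolved in the standard measurement-based style, since all Bell-basis measurements can be performed in parallel once the ancilla states are in place, and the resulting Pauli corrections can be pushed through the linear re-factoring isometries at each interior vertex and accumulated at the sink, so no vertex has to wait on its own output. On the successful SLOCC branch, the cascade then deterministically teleports a $QMC(\mathcal{N})$-dimensional state from $s$ to $t$, which is the required protocol.
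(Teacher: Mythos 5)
Your high-level picture --- realize the strict integer flow as a teleportation cascade in which $f_{u,v}$ is the dimension of the system sent from $u$ to $v$ --- matches the paper's, but your argument omits the one piece of genuine combinatorial content: the construction of a consistent global \emph{routing}. At an interior vertex $v$, conservation only says $\prod_{u}f_{u,v}=\prod_{w}f_{v,w}$, i.e.\ that the incoming and outgoing Hilbert spaces have equal dimension; it does not tell you \emph{which} unitary identification $\bigotimes_{u}\mathbb{C}^{f_{u,v}}\cong\bigotimes_{w}\mathbb{C}^{f_{v,w}}$ to use, and the composite $s\to t$ map of the cascade is full rank only for suitable choices. Two things can go wrong. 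First, a strict flow may still carry nontrivial flow around a closed directed cycle of interior vertices (strictness only forbids flow into $s$ and out of $t$); such a cycle contributes a traced-out loop to the contracted network, and if the refactoring at a cycle vertex couples the loop factor to the $s$--$t$ factors, information is lost into the trace and the transmitted dimension drops below $QMC(\mathcal{N})$ --- a local dimension count cannot detect this. Your ``measurement-based scheduling'' remark addresses only the temporal ordering of the Bell measurements, not the existence of genuinely cyclic flow. Second, even in the absence of cycles you must exhibit an assignment of incoming tensor factors to outgoing tensor factors that chains up into $s$--$t$ paths; writing ``$v$ re-factors the system as $\bigotimes_u R_{v,u}$'' presumes such an assignment exists and is consistent across vertices, which is exactly what needs to be proved.

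The paper supplies precisely this missing ingredient. Writing each $f_{u,v}$ in terms of its prime factorization, multiplicative conservation of $f$ becomes additive conservation of the exponents $\alpha_{p:u,v}$, so for each prime $p$ these valuations form an ordinary integer flow $\mathcal{N}_p$; closed cycles are cancelled by subtracting $1$ around the cycle (which changes neither $|f|$ nor feasibility, since no cycle edge lies in a minimum cut), and the remainder decomposes into directed $s$--$t$ paths, each realized by teleporting one $p$-dimensional subsystem along it. You need to either import this prime/path decomposition or prove an equivalent routing statement; without it the cascade is underdetermined and the conclusion that it ``deterministically teleports a $QMC(\mathcal{N})$-dimensional state'' does not follow. (A minor separate point: your concern that the subspace projection is stochastic is unnecessary --- converting a $d$-dimensional maximally entangled state into one of dimension $m\le d$ is deterministic under LOCC by Nielsen's majorization criterion, and in any case stochastic success would suffice for the SLOCC application of this theorem later in the paper.)
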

\begin{proof}
According to $QMF_s(\mathcal{N})=QMC(\mathcal{N})$, there is a flow $f:E\mapsto \mathbb{N}$ with $f_{v,s}=f(t,u)=1$ for all $(v,s),(t,u)\in E$ and 
\begin{align*}
|f|=\mathop{\Pi}\limits_{v:(s,v)\in E}f_{s,v}=QMC(\mathcal{N}).
\end{align*}
We use the idea in the proof of Lemma \ref{main-2} to construct a teleportation protocol to implement the flow $f$.

For any $p\mid f_{s,v_0}$ for some $(s,v_0)\in E$, we consider $\mathcal{N}_p$, then we can find a sequence 
$s,v_0,\cdots,v_m,t$ such that $p\mid f_{v_i,v_{i+1}}$ and $p\mid f_{v_m,t}$. 

We may find s circle $u_0,\cdots,u_r,u_0$, using the conservation condition, we can always eliminate such circle by modifying
\begin{align*}
f'_{u_i,u_{i+1}}=f_{u_i,u_{i+1}}\times p^{-1},\\
f'_{u_r,u_{0}}=f_{u_r,u_{0}}\times p^{-1},
\end{align*}
and leave the rest of $f$ unchange.
One can verify $f'$ is still a strict flow with $|f'|=QMC(\mathcal{N})$.
Therefore, we can assume there is no such circle in $f$. This means $s\notin\{v_0,\cdots,v_m\}$.

Now, we can teleport the $p$ dimensional quantum system from $s$ to $t$ firstly. The effect is 
\begin{align*}
f''_{s,v_0}=f_{s,v_0}\times p^{-1},\\
f''_{v_i,v_{i+1}}=f_{v_i,v_{i+1}}\times p^{-1},\\
f''_{v_m,T}=f_{v_m,T}\times p^{-1}.
\end{align*}
$f''$ is still a strict flow with $|f''|=QMC(\mathcal{N})\times p^{-1}$.

Repeat this procedure, we can transmit quantum system from $s$ to $t$ until the rest flow $g$ satisfies $g_{s,v}=1$ for all $(s,v)\in E$. The dimension of the transmitted system is
$\mathop{\Pi}\limits_{v:(s,v)\in E}f_{s,v}=QMC(\mathcal{N})$.

According to Lemma \ref{flow},
\begin{align*}
1=\mathop{\Pi}\limits_{v:(s,v)\in E}g_{s,v}=|g|=\mathop{\Pi}\limits_{w:(w,t)\in E}g_{w,t}.
\end{align*}
Then, $g_{w,t}=1$ for all $(w,t)\in E$.

If there is $g_{u_0,u_1}>1$ for some $(u_0,u_1)\in E$. According to the conservation condition of $g$, we can have a circle $u_0,u_1,\cdots,u_m$ such that $q\mid f_{u_i,u_{i+1}}$ and $q\mid f_{u_m,u_0}$ for any $q\mid g_{u_0,u_1}$. This contradicts the assumption that there is no such circle after the circle elimination.

Therefore, the flow $g$ is an empty flow. In other words, we already transmit $QMC(\mathcal{N})$ dimensional quantum system from $s$ to $t$ using teleportation.
\end{proof}

Using this lemma, we can derive the following relation between $QMF$ and $\widetilde{QMF}$.
\begin{lem}\label{relation}
For quantum tensor network $\mathcal{N} = (G, d, S, T)$, if 
\begin{align*}
QMF_s(n\mathcal{N})=QMC(n\mathcal{N}).
\end{align*}
then,
\begin{align*}
\widetilde{QMF}(\mathcal{N})=QMC(\mathcal{N}).
\end{align*}
\end{lem}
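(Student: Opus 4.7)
The plan is to upgrade Theorem \ref{opt} from a statement about teleportation protocols into a statement about the rank of a contracted tensor network, and then close the loop with Theorem \ref{exist}. From the hypothesis, Theorem \ref{opt} gives a teleportation protocol that faithfully transmits a $QMC(\mathcal{N})$-dimensional quantum system from $s$ to $t$, using only local operations and free classical communication on top of the edge-wise maximally entangled states $\ket{\psi_e}$. My goal is to exhibit an explicit tensor assignment $\mathcal{T}=\{\mathcal{T}_u : u\in\tilde V\}$ with $\Op{rank}(\beta_{\mathcal{T}})\geq QMC(\mathcal{N})$; combined with Theorem \ref{exist} this will give the desired equality.

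To build $\mathcal{T}_u$, I would first post-select the protocol on a single branch of its classical communication tree, namely the branch on which every intermediate Bell measurement yields the outcome that requires the trivial Pauli correction. On each edge $(u,v)$, the strict flow $f$ gives a factorization $f_{u,v}\cdot f_{v,u}\mid d_{u,v}$, and I would correspondingly decompose $\ket{\psi_{(u,v)}}$ as $\ket{\Phi_{f_{u,v}}}\otimes\ket{\Phi_{f_{v,u}}}\otimes\ket{\Phi_{\text{idle}}}$, so that one subfactor is consumed by teleportation from $u$ to $v$, the second by teleportation in the reverse direction, and the third remains unused. After post-selection, each such teleportation step reduces to a fixed rank-one projector acting on the appropriate subfactor at the measuring endpoint, so the combined local operation at each internal vertex $u$, possibly composed with any auxiliary unitaries or isometries drawn from the protocol, can be absorbed into a single tensor $\mathcal{T}_u\in\bigotimes_{e\in E(u)}\mathbb{C}^{d_e}$.

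Next I would verify by direct contraction that $\beta_{\mathcal{T}}$ coincides, up to a nonzero scalar determined by the (nonzero) success probability of the post-selected branch, with the ideal teleportation map from $V(S)$ to $V(T)$. Since that map embeds a $QMC(\mathcal{N})$-dimensional subspace of $V(S)$ isometrically into $V(T)$, this will give $\widetilde{QMF}(\mathcal{N})\geq QMC(\mathcal{N})$; Theorem \ref{exist} supplies the reverse inequality, completing the argument.

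The step I expect to require the most care is the bookkeeping on edges that carry simultaneous two-way flow with both $f_{u,v}>1$ and $f_{v,u}>1$. I must confirm that splitting $\ket{\psi_{(u,v)}}$ into two independent maximally entangled subfactors is compatible with the tensor-network formalism, that the post-selected trivial-correction outcomes at the two endpoints can be chosen consistently, and that the resulting local tensor at every internal vertex respects the flow conservation $\Pi_{u:(u,v)\in E} f_{u,v}=\Pi_{u:(v,u)\in E} f_{v,u}$ in the sense that the relevant Bell-projector rank factors at $v$ match up on both sides. Once that splitting is made precise, the remainder is routine linearity of post-selected teleportation circuits.
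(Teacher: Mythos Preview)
Your plan is sound: post-selecting the teleportation protocol from Theorem~\ref{opt} on the trivial-correction branch and reading off the resulting local projectors at each $u\in\tilde V$ does yield a tensor assignment with $\Op{rank}(\beta_{\mathcal T})\ge QMC(\mathcal N)$, after which Theorem~\ref{exist} finishes. The paper follows the same overall logic but packages the key step differently. Rather than building tensors in $\mathcal N$ directly, it forms an auxiliary network $\mathcal M$ by adjoining a fresh source $s'$ and sink $t'$, each joined to $s$ (resp.\ $t$) by a single edge of capacity $QMC(\mathcal N)$. In $\mathcal M$ the old $s,t$ are internal, so the \emph{entire} teleportation protocol---including the initiating Bell measurement at $s$ and the corrections at $t$---can be absorbed into tensors there, giving $\widetilde{QMF}(\mathcal M)\ge QMC(\mathcal N)$; a short chain-rank argument (contract $\tilde V$ first to get a rank-$\widetilde{QMF}(\mathcal N)$ map $V(S)\to V(T)$, then compose with the tensors at $s,t$) shows $\widetilde{QMF}(\mathcal M)\le\widetilde{QMF}(\mathcal N)$. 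This detour buys precisely the bookkeeping you flag as delicate: with $s,t$ carrying tensors, one never has to split two-way edges or match flow conservation to local tensor dimensions by hand. Your direct construction is equally valid and arguably more explicit; just note that $f_{u,v}f_{v,u}$ need not divide $d_{u,v}$, so your ``idle'' factor should be read as a subspace embedding of $\mathbb{C}^{f_{u,v}f_{v,u}}$ into $\mathbb{C}^{d_{u,v}}$ rather than a genuine tensor factor.
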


\begin{proof}

According to Theorem \ref{opt}, there is a teleportation protocol which transmits $QMC(\mathcal{N})$ dimensional quantum system from the source $s$ to the sink $t$.

We design a new quantum tensor network $\mathcal{M} = (G', d', S', T')$ with $G'=(V',E')$ where
\begin{align*}
V'=V\cup\{s'\}\cup\{t'\},\\
S'=\{s'\},\ \ T'=\{t'\},\\
V'=V\cup \{(s',s)\}\cup\{(t,t')\},\\
d'_{s',s}=d'_{t,t'}=QMC(\mathcal{N}).
\end{align*}
From one hand, we apply the teleportation protocol promised by Theorem \ref{opt}, then the entanglement shared between $s'$, and $t'$ will be a $QMC(\mathcal{N})$ dimensional maximally entangled state.

On the other hand, $\widetilde{QMF}(\mathcal{N})$ is the maximal dimension of the entanglement that can be generated through applying SLOCC on vertices $v\in V\setminus(S\cup T)$. We apply the SLOCC protocol on vertices $v\in V\setminus(S\cup T)$, and obtain a entangled state between $s$ and $t$ with rank $\widetilde{QMF}(\mathcal{N})$. The result is an one-dimensional chain $s'stt'$ where the entanglement rank in $(s',s)$ is $QMC(\mathcal{N})$, in $(s,t)$ is $\widetilde{QMF}(\mathcal{N})$, in $t,t'$ is $QMC(\mathcal{N})$. By assigning tensors on $s$ and $t$, and contract them would result entanglement between $s'$ and $t'$. Then rank is at most $\widetilde{QMF}(\mathcal{N})$. According to the structure of SLOCC, we know that the maximum rank of entanglement between $s'$ and $t'$ over all tensors assignments in $V$ is at most $\widetilde{QMF}(\mathcal{N})$.

Because the teleportation protocol is always an SLOCC protocol, we conclude that
\begin{align*}
QMF_s(n\mathcal{N})\leq \widetilde{QMF}(\mathcal{N}).
\end{align*}

According to Theorem \ref{exist},
\begin{align*}
\widetilde{QMF}(\mathcal{N})\leq QMC(\mathcal{N}),
\end{align*}
we conclude that
\begin{align*}
\widetilde{QMF}(\mathcal{N})=QMC(\mathcal{N}).
\end{align*}
\end{proof}
According to Lemma \ref{main}, Lemma \ref{main-2} and Lemma \ref{relation}, we have
\begin{thm}\label{relation}
For quantum tensor network $\mathcal{N} = (G, d, S, T)$, there exits infinitely many $n$ such that
\begin{align*}
\widetilde{QMF}(n\mathcal{N})=QMC(n\mathcal{N}).
\end{align*}
\end{thm}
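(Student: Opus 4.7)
The plan is to chain together the three results immediately preceding the theorem: Lemma \ref{main}, Lemma \ref{main-2}, and the lemma labelled \texttt{relation} that bears the teleportation-to-rank content. The overall strategy is to first establish the equality between $QMF_s$ and $QMC$ on infinitely many dilations $n\mathcal{N}$, and then convert this integer/combinatorial statement into the rank-based statement about $\widetilde{QMF}$ via the teleportation interpretation.

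First I would apply Lemma \ref{main} to obtain an infinite set $\mathcal{I}\subseteq\mathbb{N}$ such that $QMF(n\mathcal{N}) = QMC(n\mathcal{N})$ for every $n\in\mathcal{I}$. For each such $n$, the realizing integer flow $f$ may still carry nontrivial inflow into $s$ or outflow from $t$, so I would invoke Lemma \ref{main-2} to upgrade $f$ into a strict integer flow of the same value. Together these two steps yield $QMF_s(n\mathcal{N}) = QMC(n\mathcal{N})$ for every $n\in\mathcal{I}$, which is exactly the statement recorded as Theorem \ref{qqmaxmin}.

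Next, for each $n\in\mathcal{I}$, I would feed the equality $QMF_s(n\mathcal{N}) = QMC(n\mathcal{N})$ into the lemma labelled \texttt{relation}, regarding $n\mathcal{N}$ itself as the base quantum tensor network to which the lemma is applied. Its conclusion then reads $\widetilde{QMF}(n\mathcal{N}) = QMC(n\mathcal{N})$; since $\mathcal{I}$ is infinite, this gives the desired infinite family of dilations and completes the proof.

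The substantive work has already been carried out in the earlier lemmas, so the assembly is short; the main obstacle is purely bookkeeping. The point that deserves a careful check is the re-invocation of the \texttt{relation} lemma on each dilation $n\mathcal{N}$, rather than on $\mathcal{N}$ itself: this is legitimate because $n\mathcal{N}$ is a quantum tensor network in the sense of the given definitions, and the teleportation protocol constructed in Theorem \ref{opt} operates natively on its edge capacities. The genuinely conceptual content, namely compiling a strict integer flow into an SLOCC (in fact teleportation) protocol whose Kraus operators realize a rank-$QMC$ contracted tensor map, is precisely what Theorem \ref{opt} and the \texttt{relation} lemma already accomplish, so no additional argument is needed here.
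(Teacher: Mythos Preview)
Your proposal is correct and mirrors the paper's own argument exactly: the paper's proof is the single sentence ``According to Lemma \ref{main}, Lemma \ref{main-2} and Lemma \ref{relation}, we have \ldots'', and you have spelled out precisely this chain. Your remark about applying the \texttt{relation} lemma with $n\mathcal{N}$ playing the role of the base network is the right way to read that lemma (whose statement in the paper carries a stray $n$), and no further work is needed.
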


\begin{Cor}
For any $\mathcal{N} = (G, d, S, T)$,
\begin{align*}
\lim_{n \to \infty} \frac{\widetilde{QMF}(n\mathcal{N})}{QMC(n\mathcal{N})}=1.
\end{align*}
\end{Cor}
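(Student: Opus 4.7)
The plan is to combine Theorem~\ref{relation} with a monotonicity property of $\widetilde{QMF}$ under increase of the parameter $n$ to squeeze the desired ratio between matching bounds. The proof of Lemma~\ref{main} in fact produces an arithmetic progression of ``good'' integers: equality $\widetilde{QMF}(n\mathcal{N})=QMC(n\mathcal{N})$ holds for every $n$ of the form $kn_0m_0$ with $k$ sufficiently large, where $n_0,m_0$ depend only on $\mathcal{N}$. Because the gaps in this progression are bounded by $n_0m_0$, the relative gap $(n-n_k)/n$ tends to $0$ as $n\to\infty$, where $n_k$ denotes the largest good integer not exceeding $n$.

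The key auxiliary step is a monotonicity lemma: $\widetilde{QMF}(n\mathcal{N}) \geq \widetilde{QMF}(m\mathcal{N})$ whenever $n \geq m$. To see this, I would embed each local tensor space $\bigotimes_{e\in E(u)} \mathbb{C}^{md_e}$ as the subspace of $\bigotimes_{e\in E(u)} \mathbb{C}^{nd_e}$ supported on the first $md_e$ coordinates per edge, together with analogous coordinate-block embeddings $V_m(S)\hookrightarrow V_n(S)$ and $V_m(T)\hookrightarrow V_n(T)$. Under this embedding, contracting the $n\mathcal{N}$ edge state $\ket{\psi_e^{(n)}}=\sum_{i=1}^{nd_e}\ket{ii}/\sqrt{nd_e}$ against the embedded tensors reduces, up to a harmless global scalar, to contracting the $m\mathcal{N}$ edge state against the original tensors, because only coordinates $i\le md_e$ survive. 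Hence every linear map $\beta_{\mathcal{T}}$ realized in $m\mathcal{N}$ is realized in $n\mathcal{N}$ with the same rank.

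Putting this together, fix a large $n$ and let $n_k$ be the largest good integer with $n_k\le n$. Once $n$ is large enough that the minimum cut set $C$ of $n\mathcal{N}$ has stabilized (as observed in the proof of Lemma~\ref{main}), we have $QMC(n\mathcal{N})=n^{|C|}\prod_{e\in C}d_e$, and monotonicity together with equality at $n_k$ gives
\begin{align*}
\widetilde{QMF}(n\mathcal{N}) \;\geq\; \widetilde{QMF}(n_k\mathcal{N}) \;=\; n_k^{|C|}\prod_{e\in C} d_e,
\end{align*}
so $\widetilde{QMF}(n\mathcal{N})/QMC(n\mathcal{N}) \geq (n_k/n)^{|C|}$, and the right-hand side tends to $1$ since $n - n_k < n_0m_0$. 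Combined with the upper bound $\widetilde{QMF}(n\mathcal{N}) \leq QMC(n\mathcal{N})$ from Theorem~\ref{exist}, the sandwich yields the corollary. The main subtlety is verifying the monotonicity lemma carefully, in particular that the subspace embedding at the source and sink preserves the rank of $\beta_{\mathcal{T}}$ rather than merely bounding it; the remainder is bookkeeping around the arithmetic progression provided by Lemma~\ref{main}.
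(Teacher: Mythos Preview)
Your proposal is correct and follows essentially the same route as the paper: both use the upper bound from Theorem~\ref{exist}, the arithmetic progression $kn_0m_0$ of good integers coming out of the proof of Lemma~\ref{main}, and monotonicity of $\widetilde{QMF}$ in $n$ to sandwich the ratio. The only difference is that you spell out the monotonicity argument via coordinate embeddings, whereas the paper simply invokes ``the monotonicity property of $\widetilde{QMF}$'' without proof.
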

\begin{proof}
According to Theorem \ref{exist}, for all $n$, we have
\begin{align*}
\frac{\widetilde{QMF}(n\mathcal{N})}{QMC(n\mathcal{N})}\leq 1.
\end{align*}

For sufficient large $n=kn_0m_0+r$ with $0<r<n_0m_0$ where $n_0$ and $m_0$ are given in the proof of Lemma \ref{main}.

According to Lemma \ref{main} and Lemma \ref{main-2} and  Lemma \ref{relation}, we have
\begin{align*}
\widetilde{QMF}(kn_0m_0\mathcal{N})=QMF_s(kn_0m_0\mathcal{N})=QMC(kn_0m_0\mathcal{N})=(kn_0m_0)^c QMC(\mathcal{N}).
\end{align*}
where $c$ is an integer determined by $\mathcal{N}$. For connected tensor network, $c\geq 1$.

By the monotonicity property of $\widetilde{QMF}$, we know that
\begin{align*}
\widetilde{QMF}(n\mathcal{N})\geq \widetilde{QMF}(kn_0m_0\mathcal{N}) =(kn_0m_0)^c QMC(\mathcal{N}).
\end{align*}
Therefore,
\begin{align*}
\frac{\widetilde{QMF}(n\mathcal{N})}{QMC(n\mathcal{N})}\geq 
\frac{(kn_0m_0)^c QMC(\mathcal{N})}{n^c QMC(\mathcal{N})}= (\frac{kn_0m_0}{n})^c
\end{align*}
That is,
\begin{align*}
1\geq \frac{\widetilde{QMF}(n\mathcal{N})}{QMC(n\mathcal{N})}\geq  (\frac{kn_0m_0}{n})^c.
\end{align*}
Observe that $k \to\infty$ as long as $n\to \infty$, then we have
  \begin{align*}
\lim_{n \to \infty} \frac{\widetilde{QMF}(n\mathcal{N})}{QMC(n\mathcal{N})}=1.
\end{align*}
\end{proof}

\section{Conclusion \& Acknowledgments}
We present a max-flow min-cut theorem for quantum tensor networks: the maximum amount of flow passing from the source to the sink is equal to the total weight of the edges in a minimum cut after attaching an appropriate dimensional maximally entangled state among each edge of the given quantum tensor network.
 
This work is supported by DE180100156 and DP210102449.

\bibliographystyle{alpha}
\bibliography{qmm}

\end{document}